%% file: main.tex
\documentclass[a4paper,12pt]{article}

\input{header}
\usepackage[normalem]{ulem}

\title{Generative modeling with Gaussian Boson Sampling: classically trainable Bosonic Born Machines}

\author[1,2]{Zolt\'an Kolarovszki}
\author[1,2]{Bence Bak\'o}
\author[3]{Micha{\l} Oszmaniec}
\author[4]{Changhun Oh}
\author[1,5,6]{Zolt\'an Zimbor\'as}
\affil[1]{HUN-REN Wigner Research Centre for Physics, Budapest, Hungary}
\affil[2]{E\"otv\"os Lor\'and University,  Budapest, Hungary}
\affil[3]{Center for Quantum Enabled-Computing, Center for Theoretical Physics of
the Polish Academy of Sciences, Al. Lotników 32/46, 02-668 Warsaw, Poland}
\affil[4]{Department of Physics, Korea Advanced Institute of Science and Technology, Daejeon 34141, Korea}
\affil[5]{University of Helsinki, Yliopistonkatu 4 00100 Helsinki, Finland}
\affil[6]{Algorithmiq Ltd, Kanavakatu 3C 00160 Helsinki, Finland}

\date{}

\usepackage{amsfonts,amsmath,amssymb,physics,mathtools,bbm}

\begin{document}

\maketitle

\begin{abstract}
    Quantum generative modeling has emerged as a promising application of quantum computers, aiming to model complex probability distributions beyond the reach of classical methods. In practice, however, training such models often requires costly gradient estimation performed directly on the quantum hardware. Crucially, for certain structured quantum circuits, expectation values of local observables can be efficiently evaluated on a classical computer, enabling classical training without calls to the quantum hardware in the optimization loop. In these models, sampling from the resulting circuits can still be classically hard, so inference must be performed on a quantum device, yielding a potential computational advantage. In this work, we introduce a photonic quantum generative model built on parametrized Gaussian Boson Sampling circuits. The training is based on the efficient classical evaluation of expectation values enabled by the Gaussian structure of the state, allowing scalable optimization of the model parameters through the maximum mean discrepancy loss function. We demonstrate the effectiveness of the approach through numerical experiments on photonic systems with up to 805 modes and over a million trainable parameters, highlighting its scalability and suitability for near-term photonic quantum devices.
\end{abstract}

\section{Introduction}
    The success of classical deep learning is closely linked to the favorable trainability of deep neural networks, enabled by efficient gradient computation techniques that support optimization in high-dimensional parameter spaces~\cite{Goodfellow-et-al-2016,baydin2018automatic,bengio2014deep}. In contrast, quantum machine learning (QML) models face many challenges, including unfavorable loss landscape~\cite{anschuetz2022quantum,mcclean2018barren,larocca2025barren}, lack of automatic differentiation via backpropagation~\cite{10.5555/3666122.3668062,gilyen2019optimizing} and significant shot-noise in loss function evaluations~\cite{hoefler2023disentanglinghypepracticalityrealistically}. 
    More concretely, unlike classical neural networks, where gradients can often be computed at a cost comparable to loss evaluation, standard gradient-estimation techniques for parametrized quantum circuits require a number of circuit evaluations that scales with the number of trainable parameters~\cite{mitarai2018,schuld2019,Wierichs_2022,Pappalardo_2025}. For large models, this leads to a substantial sampling overhead, which is further exacerbated by the comparatively low sampling rates of current quantum hardware~\cite{hoefler2023disentanglinghypepracticalityrealistically}, the inherently probabilistic nature of quantum measurements, and barren plateaus~\cite{mcclean2018barren,larocca2025barren}.
    
    Proposed strategies for overcoming such challenges include developing more efficient gradient estimation methods~\cite{Sweke_2020,coyle2025training,bowles2025backpropagation} and restricting circuit architectures to improve loss landscapes~\cite{PRXQuantum.5.020328,Monbroussou2025trainability,PRXQuantum.3.030341,PRXQuantum.4.010328,PRXQuantum.4.020327}.
    However, the conditions that ensure good trainability in quantum models by avoiding barren plateaus often align with the ability to efficiently simulate expectation values of local observables using classical methods. This presents a challenge for supervised quantum machine learning models, as they typically output such expectation values, casting doubt on whether they can achieve true quantum utility. In contrast, generative learning frameworks are not constrained by this limitation, making them particularly promising candidates for demonstrating quantum computational advantage.

    Early attempts at classically training quantum generative models relied on estimating global output probabilities~\cite{kasture2023protocols}, which limits scalability. Building on trainability results for generative models, the authors of Ref.~\cite{rudolph2024trainability} proposed an expectation-value-based formulation of the squared maximum mean discrepancy (\mmd) loss function that was later used to construct a scalable training framework for parametrized IQP circuits~\cite{recio2025iqpopt,recio2025train}. Subsequent work established universality of such models under additional resources~\cite{kurkin2025universality}. More recently, an alternative architecture, called Fermionic Born Machine~\cite{bakó2025fermionicbornmachinesclassical}, was introduced, providing a classically trainable generative model based on the Fermion Sampling quantum advantage scheme~\cite{oszmaniec2022fermion}.
    
    In this work, we take a step away from qubit-based QML models and introduce \emph{Gaussian Bosonic Born Machines} (GBBMs), which form a class of classically trainable quantum generative models inspired by Gaussian Boson Sampling~\cite{hamilton2017,Kruse_2019}. Our approach primarily employs parity-operator strings as observables, which play a role analogous to Pauli-$Z$ operators in qubit-based models and enable an expectation-value formulation of the $\mathrm{MMD}^2$ loss in the bosonic setting. Crucially, this allows the loss to be evaluated efficiently using only the first and second moments of the Gaussian state, avoiding explicit sampling during training, while still retaining sampling hardness in the appropriate regimes~\cite{hamilton2017,go2025sufficientconditionshardnesslossy,li2025complexity_transition_dgbs}. The overall training and inference workflow of the GBBM is depicted in \cref{fig:gbbm_schematic}.

    \begin{figure}
        \centering
        \includegraphics[width=0.7\linewidth]{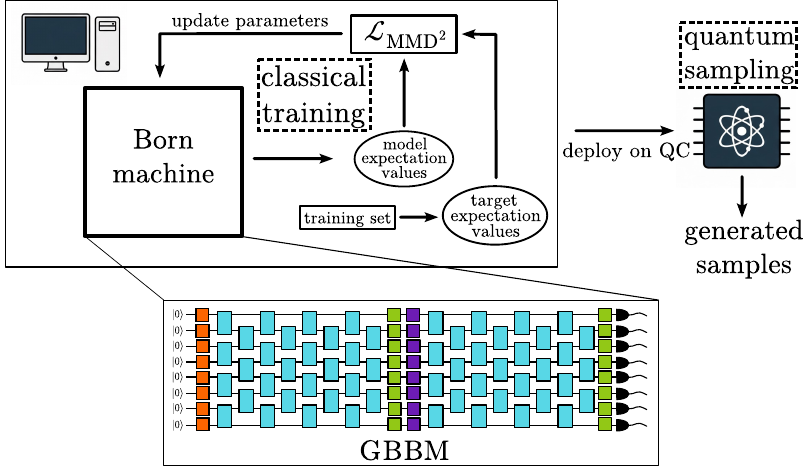}
        \caption{Schematic overview of the Gaussian Bosonic Born Machine (GBBM) framework. During \emph{classical training}, the Born machine is optimized by minimizing an $\mathcal{L}_{\mathrm{MMD}}^2$ loss between model and target expectation values computed from the training set. After classical training, the learned GBBM is deployed for \emph{quantum sampling}, where measurements on the quantum device produce generated samples.}
        \label{fig:gbbm_schematic}
    \end{figure}

    Beyond trainability, photonic architectures are also attractive from a hardware perspective. Photonic experiments can operate at repetition rates in the range 1-100 MHz, as set by optical source and detector technologies~\cite{cogan2021deterministicsourceindistinguishablephotons,liu2025robustquantumcomputationaladvantage}. In contrast, while qubit-based platforms support gate operations on nanosecond timescales, the effective inference rate for complete circuit execution (including measurement and reset) is typically limited to the kilohertz regime on current devices~\cite{Arute_2019,rajeev_2022,Stano_2022}. This separation in achievable repetition rates suggests that photonic platforms may offer substantially higher inference throughput in near-term implementations. In addition, Gaussian Boson Sampling architectures appear particularly promising from a scalability perspective, as illustrated by recent large-scale demonstrations on the Jiuzhang 4.0 processor using 1024 input squeezed states and 8176 output modes~\cite{liu2025robustquantumcomputationaladvantage}.

    The paper is organized as follows. In \Cref{sec:framework}, we review the preliminaries, introduce the GBBM model and derive the parity operator and threshold operator formulation of the $\mathrm{MMD}^2$ training objective. Subsequently, \Cref{sec:classical_training_algorithm} details the classical training algorithm for the GBBM based on parity operators.
    In \Cref{sec:benchmarks}, we present a numerical benchmark demonstrating scalability and practical performance with up to $805$ bosonic modes, and more than $10^6$ trainable parameters. Finally, we conclude in \cref{sec:conclusion} with a discussion of limitations, experimental considerations, and possible extensions beyond the Gaussian setting.

\section{Framework of Gaussian Bosonic Born Machines}\label{sec:framework}
    This section presents a framework for classically trainable Born machines in the bosonic setting by employing Gaussian states.
    First, we provide the necessary background on bosonic Gaussian states, Gaussian Boson Sampling and classical training strategies adapted to a photonic domain. We then present a reformulation of the \mmd\ loss function using operator expectation values relevant in bosonic systems.
    Finally we introduce our generative model construction, which is designed to be trained efficiently using classical computing resources and deployed on quantum hardware.

    \subsection{Gaussian states and Gaussian Boson Sampling}\label{ssec:gaussian_states_and_gaussian_boson_sampling}
        Here we give a basic introduction of relevant concepts needed for introducing our proposed quantum generative model. For a more detailed background on Gaussian states we refer to Refs.~\cite{serafini,adesso:2014} and Appendix~\ref{app:gaussian}; concerning Gaussian Boson Sampling, we refer to Refs.~\cite{hamilton2017,Kruse_2019}.

        Gaussian states represent a fundamentally important and experimentally accessible class of continuous-variable quantum states.
        In particular, they can be fully characterized by their displacement vector $\bm{\mu}$ and covariance matrix $\Sigma$ defined elementwise by
        \begin{align}\label{eq:meancov}
        \begin{split}
            \mu_i &\coloneqq \Tr \left [\, \rho \, \hat{\chi}_i  \right ], \\
            \Sigma_{ij} &\coloneqq  \Tr \left [ \, \rho \, \{ \hat{\chi}_i - \mu_i \mathbbm{1}, \hat{\chi}_j - \mu_j \mathbbm{1} \} \right ],
        \end{split}
        \end{align}
        where $\{A, B\} \coloneqq AB + BA$ is the anticommutator and $\bm{\chi}$ is a vector of quadrature operators defined as $ \bm{\chi} = [\hat{\chi}_1, \dots \hat{\chi}_{2d}]^T = [\hat{x}_1, \dots \hat{x}_d, \hat{p}_1, \dots, \hat{p}_d]^T$.
        We call Gaussian states with displacement vector $\bm{\mu} = \bm{0}$ \emph{centralized} (or non-displaced) Gaussian states.
        
        First introduced in Refs.~\cite{hamilton2017, Kruse_2019}, the Gaussian Boson Sampling protocol demonstrates quantum advantage by sampling the photon-number distribution of a randomly generated Gaussian state.
        Similarly to standard Boson Sampling with Fock-state inputs~\cite{aaronson2011bosonsampling}, generating samples from these photonic output distributions is widely believed to be classically hard in the relevant parameter regimes.
        The complexity of taking a sample from a centralized Gaussian state is characterized by the evaluation of a matrix function called the hafnian~\cite{Caianiello1973}, which, in general, is a difficult task using classical resources. Hence, it is widely believed that performing an ideal Gaussian Boson Sampling experiment is intractable on a classical computer~\cite{Grier_2022,li2025complexity_transition_dgbs}.
        
        Traditionally, Gaussian Boson Sampling experiments use photonic circuits composed of squeezing gates and multimodeinterferometers~\cite{photonicadvantage1,zhong2021phase,borealis,liu2025robustquantumcomputationaladvantage}, resulting in pure centralized Gaussian states in ideal conditions.
        However, as we explain in the subsequent sections, our circuit ansatz also uses displacement gates, resulting in displaced Gaussian states. Consequently, the probability distribution corresponding to the particle-resolved particle number detection is characterized by the loop hafnian~\cite{quesada2021quadratic,Bulmer_2022_threshold}. This is a generalization of the conventional Gaussian Boson Sampling experiment, hence, it is also believed to be classically intractable to simulate in low-displacement regimes~\cite{li2025complexity_transition_dgbs}.

        The current state-of-the-art classical simulation of an ideal Gaussian Boson Sampling experiment can be found in Ref.~\cite{Bulmer_2022}. However, as realistic Gaussian Boson Sampling experiments are currently lossy, one can make the sampling algorithm more efficient by incorporating these losses into the classical sampling algorithm~\cite{changhun_mps,go2025sufficientconditionshardnesslossy}. It should be noted, that local noise in large quantum circuits may have the
        effect of strongly suppressing higher-order correlations, which also leads to an efficient algorithm for reproducing experimental Gaussian Boson Sampling data~\cite{dodd2025fastfrugalgaussianboson}.

    \subsection{Training quantum generative models using MMD$^2$}\label{ssec:classical_training_of_q_gen_models}
        Generative modeling aims to learn a parameterized probability distribution \(q_{\bm{w}}\) that approximates a target distribution \(p\), so that samples drawn from \(q_{\bm{w}}\) are statistically similar to those drawn from $p$. Accordingly, the representation should be paired with a protocol that enables efficient sampling from \(q_{\bm{w}}\).

        At the heart of our approach lies the concept of \textit{Quantum Circuit Born Machines} (QCBMs), where the model distribution is implemented using a parametrized quantum circuit followed by a relevant measurement on all qubits~\cite{benedetti2019generative, PhysRevA.98.062324, coyle2020born}, targeting generative problems over $d$ binary random variables.
        In a typical qubit setting, a QCBM is characterized by a circuit ansatz \(U(\bm{w})\) with trainable parameters \(\bm{w}\). Measurement of all qubits in the computational basis defines the model distribution through the Born rule
        \begin{equation}\label{eq:q_w}
        q_{\bm{w}}(\bm{x}) = \Tr [ U(\bm{w})\ketbra{\bm{0}} U^\dagger(\bm{w}) M_{\bm{x}}],
        \end{equation}
        where $\bm{x}$ is a bitstring with $x_i \in \{0,1\}$ and $M_{\bm{x}} = \ketbra{\bm{x}} = \bigotimes_{i=1}^d \ketbra{x_i}$. The parameters are optimized by minimizing a suitable loss so that the generated samples match the training data.
        Similarly, photonic QCBMs~\cite{Salavrakos_2025} are obtained by initializing the system in a $d$-mode vacuum state instead of $|\bm{0}\rangle$, implementing $U(\bm{w})$ as a photonic circuit ansatz, and defining $M_{\bm{x}}$ as the projector onto measurement outcome patterns labeled by $\bm{x}$. It is worth noting that continuous-variable Born machines can also be realized in photonic systems through homodyne measurements~\cite{cepaite2022,kolarovszki_cvbm}, which provide a continuous-outcome alternative to our discrete measurement schemes.

        Besides the choice of ansatz, another important building block in this framework is the loss function. Explicit loss functions, such as the total variation distance (TVD) or the Kullback–Leibler (KL) divergence, are generally hard to estimate accurately from samples, with sample requirements that often scale poorly with dimension. This can make training prohibitively costly and may further worsen trainability issues such as barren plateaus~\cite{rudolph2024trainability}. A convenient alternative is the \textit{squared maximum mean discrepancy} ($\text{MMD}^2$), which is defined below~\cite{gretton}.
        In this work, we consider binary-valued outcomes $\bm{x}\in\{0,1\}^d$, where $x_i$ denotes the binary outcome of mode $i$.
        Let $p$ and $q_{\bm{w}}$ be two probability distributions over outcomes $\bm{x}\in\{0,1\}^d$, representing the target and model distributions, respectively.
        Given a kernel function $K(\bm{x},\bm{y})$, the $\text{MMD}^2$ between these two probability distributions is defined as
        \begin{equation}\label{eq:mmd2_def}
            \mathrm{MMD}^2(p,q_{\bm{w}})
            \coloneqq
            \mathbb{E}_{\bm{x},\bm{x}'\sim p}\!\left[K(\bm{x},\bm{x}')\right]
            +
            \mathbb{E}_{\bm{y},\bm{y}'\sim q_{\bm{w}}}\!\left[K(\bm{y},\bm{y}')\right]
            -
            2\mathbb{E}_{\bm{x}\sim p,\bm{y}\sim q_{\bm{w}}}\!\left[K(\bm{x},\bm{y})\right],
        \end{equation}
        where $K : \{ 0, 1 \}^d \times \{ 0, 1 \}^d \to \mathbb{R}$ is a kernel function.
        The Gaussian kernel
        \begin{equation} \label{eq:gaussian_kernel_func}
            K_{\sigma}(\bm{x},\bm{y}) \coloneqq 
            \exp\!\left(-\frac{\|\bm{x}-\bm{y}\|_2^2}{2\sigma}\right)             =
            \prod_{i=1}^{d}\exp\!\left(-\frac{(x_i-y_i)^2}{2\sigma}\right)
        \end{equation}
        is a natural choice for the \(\mathrm{MMD}^2\)-based training, since it is characteristic and thus uniquely specifies the underlying probability distribution~\cite{gretton}. 
        
        As shown in Ref.~\cite{rudolph2024trainability}, the $\text{MMD}^2$ loss can also be expressed using expectation values of Pauli-$Z$ strings, which is a key observation leading to classically trainable models.
        However, to date, such reformulations of the loss function $\mathcal{L}_{\rm MMD^2}(p,q_{\bm{w}})$ have been established primarily for qubit-based quantum systems, with a straightforward extension to fermionic models~\cite{bakó2025fermionicbornmachinesclassical}.
        For our purposes, we need to generalize this formulation to the bosonic setting.
        We can start by considering projective measurements with binary outcomes 
        labeled by $\bm{x} \in \{ 0, 1 \}^{d}$, 
        where the $j$th element $x_j$ corresponds to a binary outcome on mode $j$. From such projectors $M_{\bm{x}}$ we can define the observable $O_{A}$ as
        \begin{equation}\label{eq:observable_measurement_relation}
            O_{A} \coloneqq \sum_{\bm{x} \in \{ 0, 1 \}^d } (-1)^{|\bm{x}_A|} M_{\bm{x}} = \bigotimes_{j \in A} (M^{(j)}_0 - M^{(j)}_1) \otimes \mathbbm{1}_{[d] \setminus A}
        \end{equation}
        where $A \subseteq [d]$ is a subset of modes and $\bm{x}_A$ denotes the vector formed from $x$ according to the indices in $A$ and $| \bm{x}_A |$ denotes the Hamming weight of $\bm{x}_A$.
        As detailed in Appendix~\ref{app:mmd_loss_photonic}, using \Cref{eq:observable_measurement_relation}, we can rewrite the MMD$^2$ expression as
        \begin{equation}\label{eq:mmd_parity}
            \mathcal{L}_{\rm{MMD}^2}(p, q_{\bm{w}}) 
            = \mathds{E}_{A \sim p_K} \!\left[(\expval{O_{A}}_p -\expval{O_{A}}_{q_{\bm{w}}})^2\right],
        \end{equation}
        where $\expval{O_{A}}_p$ and $\expval{O_{A}}_{q_{\bm{w}}}$ denote expectation values of the operator string $O_{A}$ under the target and model distributions, respectively.
        In this expression, \(K\) denotes the kernel and \(p_K\) the corresponding distribution induced over operator strings, given by
        \begin{equation}\label{eq:p_K_sigma}
            p_{K_{\sigma}}(A)
            =
            (1-p_{\sigma})^{d-|A|}\,p_{\sigma}^{|A|},
        \end{equation}
        where $p_{\sigma}\coloneqq \frac{1}{2} (1-e^{-1/2\sigma})$.
    
        To generalize this procedure for bosonic systems, we define two natural projective measurements that provide binary outcomes suitable for the MMD$^2$ reformulation. These analogs to the qubit Z-measurement represent common detection schemes in quantum optics:
        \begin{enumerate}
            \item \textbf{Parity detection}~\cite{Gerry_2010}: distinguishes between even and odd particle numbers, defined by the single-mode measurement projectors
            \begin{equation}
                M_0 = \sum_{k=0}^{\infty} \ketbra{2k}, \qquad M_1 = \mathbbm{1} - M_0.
            \end{equation}
            
            \item \textbf{Threshold detection}~\cite{threshold_gbs}: differentiates between the vacuum state (``no-click'') and all excited states (``click''), defined by the single-mode measurement projectors
            \begin{equation}
                M_0 = \ketbra{0}, \qquad M_1 = \mathbbm{1} - M_0.
            \end{equation}
        \end{enumerate}
        Both of these measurement schemes effectively ``discretize'' the infinite-dimensional bosonic Hilbert space into a format compatible with the MMD$^2$-based training of QCBMs.
        Moreover, both of these measurements are feasible to perform in a physical experiment; parity detection is extensively used in the Wigner tomography of quantum states~\cite{hofheinz2009synthesizing,vlastakis2013deterministically},
        while threshold detection can be performed with single-photon detectors that are inexpensive and can be operated at room temperature~\cite{Hadfield2009}.

    \subsection{Gaussian Bosonic Born Machines}\label{ssec:model_construction}
        Inspired by the Gaussian Boson Sampling quantum advantage proposal and recent progress in classically trainable quantum generative models, we propose \textit{Gaussian bosonic Born machines} (\gbbm s), a class of photonic QCBMs formally defined as follows:

        \begin{definition}[Gaussian Bosonic Born Machine]
            A \gbbm\ on $d$ bosonic modes is a quantum generative model with the following components:

            \begin{enumerate}
            \item \textbf{Initial state:} an arbitrary Gaussian state, conveniently the vacuum state $\ket{0}$;
            \item \textbf{Variational circuit:} a parameterized photonic quantum circuit composed of Gaussian operations, including multiport interferometers, squeezing and displacement gates;
            \item \textbf{Measurement:} terminal particle number measurements performed on each mode, which is coarse-grained to yield a discrete bitstring-valued probability distribution.
        \end{enumerate}
        \label{def:model}
        \end{definition}

        Our construction of the variational ciruit ansatz follows the continuous-variable quantum neural network architecture proposed in Ref.~\cite{killoran2019continuous}, but restricts the model to linear transformations by omitting the Kerr nonlinearity. Note that, although multiple layers are formally redundant when only linear transformations are used, their inclusion can still improve the optimization behavior of the model; see the discussion below.
        We write the parameter vector as $\bm{w} \coloneqq (\bm{w}^{(1)}, \dots, \bm{w}^{(L)})$ with $L$ being the number of layers in the ansatz and each layer indexed by $k$ is described by
        \begin{equation}
            \boldsymbol{w}^{(k)} \coloneqq
            \bigl(
            \boldsymbol{\alpha}^{(k)},
            \boldsymbol{\theta}_1^{(k)},
            \boldsymbol{r}^{(k)},
            \boldsymbol{\theta}_2^{(k)}
            \bigr),
        \end{equation}
        where
        \begin{enumerate}
            \item $\bm{\alpha}^{(k)} \in \mathbb{R}^d$ is a displacement in the position direction,
            \item $\bm{\theta}_1^{(k)} \in [0, 2\pi)^{d^2}$ defines the first interferometer by $U_1^{(k)} \coloneqq U(\bm{\theta}_1^{(k)}) \in U(d)$,
            \item $\boldsymbol{r}^{(k)} \in \mathbb{R}^d$ are squeezing parameters of the squeezing gates,
            \item $\bm{\theta}_2^{(k)} \in [0, 2\pi)^{d^2}$ defines the second interferometer by $U_2^{(k)} \coloneqq U(\bm{\theta}_2^{(k)}) \in U(d)$.
        \end{enumerate}
        The interferometers are parameterized using standard linear-optical decompositions. In this work, we employ the Clements decomposition~\cite{clements2017optimal}, although the Reck decomposition~\cite{reck} could be used equivalently.
        Although most Gaussian Boson Sampling experiments focus on centered Gaussian states, incorporating displacement gates into the circuit ansatz removes this restriction and substantially enhances the expressivity of the model. Hence, in total, a single layer of the resulting ansatz containing the Clements interferometer layout has $2d^2 + 2d$ trainable parameters.

        We further note that this variational circuit implements a Bloch-Messiah decomposition~\cite{cariolaro2016bloch} and is therefore universal for preparing arbitrary pure Gaussian states with a single layer from the vacuum state. 
        However, the brickwork structure of this ansatz limits information spreading during training. To overcome this obstacle, we first adapt the method discussed in Ref.~\cite{bakó2025fermionicbornmachinesclassical} for a fermionic case and implement multiple layers with independent trainable parameters. This technique was found to enhance the optimization properties of the model associated with overparameterization, while still retaining an efficient classical description of the Gaussian state via its mean vector and covariance matrix. 
        To extend these results, we also introduce an alternative ansatz construction that, instead of a brickwork architecture, implements two-mode gates via beamsplitters along a pre-defined arbitrary graph (with edge ordering specified due to the non-commuting nature of the beamsplitters). As we show in our numerical demonstrations, this can be used to improve information spreading while providing additional flexibility in choosing the number of trainable parameters.
        
        The expectation values of the required operator strings can be computed efficiently on a classical computer; the specific details of this procedure are presented in the next section. Once classical training is complete, the resulting circuit ansatz provides a full description of a pure Gaussian state. In particular, this state can be compiled on a photonic processor using only $\mathcal{O}(d^2)$ linear optical gates in the worst case, a complexity that remains independent of the number of layers used during the classical optimization phase. 
        For inference, the optimized parameters are deployed onto a photonic quantum device equipped with photon number detectors. To generate the discrete data, we only use a coarse-graining of the raw photon counts, possibly circumventing the experimental requirement for photon-number-resolving detectors. Specifically, we consider two types of detectors: parity detectors, governed by the transformation $n\mapsto\frac{1}{2}(1-(-1)^n)$ where $n$ is the raw photon count; and threshold detectors, defined by the mapping
        $$n \mapsto \begin{cases}
        0 & \quad n = 0,\\
        1 & \quad n \geq 1.
        \end{cases}$$
        Either choice of detector effectively transforms the underlying Gaussian state into a discrete bitstring distribution suitable for generative modeling tasks.

        The threshold variant of Gaussian Boson Sampling is widely believed to be computationally intractable to simulate classically~\cite{threshold_gbs}. Because parity and threshold measurements are equivalent in the dilute regime---where the number of modes scales at least quadratically with the total number of photons---this computational hardness naturally extends to the parity-based variant. Note, however, is that sufficiently large displacements can induce a complexity transition, rendering the sampling task classically efficient~\cite{li2025complexity_transition_dgbs}. Moreover, the initial training parameters are typically chosen so that the average photon number of the resulting state scales linearly with the number of modes. Assuming, in addition, that the training data are such that the average Hamming weight scales linearly with the bitstring length, we expect quantum inference to typically operate in the linear regime (See Ref.~\cite{bouland2025complexitytheoreticfoundationsbosonsamplinglinear} for proofs of hardness of Boson Sampling and Bipartite Gaussian Boson Sampling with particle number detectors).

\section{Efficient classical training of parity GBBMs}\label{sec:classical_training_algorithm}
    We continue by presenting an efficient classical training algorithm for GBBMs based on parity operator strings. The key observation is that expectation values of parity strings on Gaussian states can be computed efficiently on a classical computer, independent of string length. Consequently, GBBMs admit an efficient classical training method.

    Our approach starts by tracking the evolution of the mean vector $\bm{\mu}$ and the covariance matrix $\Sigma$ of the underlying Gaussian quantum state. The input state is the vacuum state, characterized by
    \begin{equation}
        \bm{\mu} = \bm{0}, \qquad \Sigma = \mathbbm{1}_{2d},
    \end{equation}
    and we apply a sequence of $L$ parameterized Gaussian transformations corresponding to the layers of the photonic circuit.
    The Gaussian state parameters transform under each layer as
    \begin{equation}\label{eq:gaussian_transformation}
        \bm{\mu} \;\mapsto\; S \bm{\mu} + \bm{r}, 
        \qquad 
        \Sigma \;\mapsto\; S \, \Sigma \, S^{T},
    \end{equation}
    where $\bm{r}$ is the displacement vector and $S$ is the real symplectic matrix associated with the corresponding Gaussian operation (composed of interferometers and squeezing). Since each update in \Cref{eq:gaussian_transformation} involves matrix multiplications of size $2d\times 2d$, the full evolution of the state can be simulated classically in $\mathcal{O}(L d^{3})$ time, where $L$ is the number of layers.

    After computing the final Gaussian state, we evaluate the parity-operator expectation values required for the loss function defined in \cref{eq:mmd_parity}. During training, a fixed number of parity-operator strings are sampled from the distribution specified by the Gaussian kernel from \cref{eq:p_K_sigma}. As discussed in Appendix~\ref{app:parity_operator_exp_values}, the expectation value of a parity-string operator $\Pi_{A}$, acting on a subset of modes $A \subseteq [d]$, admits a closed-form expression:
    \begin{equation}\label{eq:parity_exp_value}
        \Tr\!\left[ \Pi_{A}\, \rho \right]
        =
        \frac{
        \exp\!\left(- \bm{\mu}_{A}^{T} \Sigma_{A}^{-1} \bm{\mu}_{A} \right)
        }{
        \sqrt{\det \Sigma_{A}}
        },
    \end{equation}
    where $\bm{\mu}_{A}$ and $\Sigma_{A}$ denote the mean vector and covariance matrix of the reduced Gaussian state $\Tr_{[d]\setminus A}[ \rho]$, obtained by restricting $\bm{\mu}$ and $\Sigma$ to the modes indexed by $A$.
    Crucially, this formula allows parity-string expectation values to be computed efficiently using only elementary numerical methods. For a parity string of length $\ell = |A|$, the computational cost of evaluating the corresponding expectation value scales as $\mathcal{O}(\ell^{3})$, dominated by the inversion (or Cholesky factorization) of the reduced covariance matrix $\Sigma_{A}$. In practice, however, $\ell \ll d$ for the parity strings of interest, so this step is typically inexpensive. Instead, the overall computational cost might often be dominated by the construction of the full covariance matrix of the $d$-mode Gaussian state. As a result, expectation value estimation remains scalable even for photonic systems with a large number of modes, with the bottleneck shifting from the parity string expectation value evaluation to the covariance matrix construction. Using the backpropagation algorithm, these observations allow for efficient gradient computation. This is summarized in the following theorem.
    \begin{theorem}[Classical trainability of parity GBBMs]\label{thm:classical_trainability_parity}
        For a \gbbm\ over $d$ bosonic modes, as defined in \cref{def:model}, the expectation value of any length-$\ell$ parity string $\Tr\!\left[\Pi_{A}\,\rho\right]$ can be evaluated classically in $\mathcal{O}(\ell^{3})$ time, given the corresponding mean vector and covariance matrix characterizing the final Gaussian state. Moreover, these objects can be obtained in $\mathcal{O}(L d^{3})$ time using $\mathcal{O}(d^{2})$ memory, where $L$ is the number of layers. Consequently, the model admits efficient classical training via automatic differentiation and backpropagation using an approximation of the \mmd\ loss function.
    \end{theorem}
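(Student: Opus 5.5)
The argument splits into three parts: cost of a single parity expectation value, cost of computing the final state, and cost of gradients through the loss.

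\emph{Parity expectation value.} The plan is to start from the closed-form expression \cref{eq:parity_exp_value}. Its derivation rests on two facts. First, the reduced state $\Tr_{[d]\setminus A}[\rho]$ of a Gaussian state is again Gaussian, with mean $\bm{\mu}_A$ and covariance $\Sigma_A$ obtained simply by selecting the $2\ell$ rows and columns belonging to the modes in $A$. Second, the parity operator $\Pi_A=\bigotimes_{j\in A}(-1)^{\hat n_j}$ equals, up to a factor $(\pi/2)^{\ell}$ in our conventions, the Wigner displaced-parity kernel evaluated at the origin. Hence $\Tr[\Pi_A\rho]$ is proportional to the value of the Gaussian Wigner function of the reduced state at phase-space point $0$, which yields \cref{eq:parity_exp_value}. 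I would take this formula as established in Appendix~\ref{app:parity_operator_exp_values}.

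The complexity count is then routine. Extracting $\Sigma_A$ and $\bm{\mu}_A$ costs $\mathcal{O}(\ell^2)$. A Cholesky factorization $\Sigma_A=LL^T$ costs $\mathcal{O}(\ell^3)$; it exists because $\Sigma_A\geq \mathbbm{1}$ by the uncertainty principle, so $\Sigma_A$ is positive definite. From the factor, $\det\Sigma_A=\prod_i L_{ii}^2$ and $\bm{\mu}_A^T\Sigma_A^{-1}\bm{\mu}_A=\|L^{-1}\bm{\mu}_A\|^2$, the latter obtained by a triangular solve in $\mathcal{O}(\ell^2)$. Together this gives the claimed $\mathcal{O}(\ell^3)$ time.

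\emph{Final Gaussian state.} Each layer acts by the map \cref{eq:gaussian_transformation}. For a Clements interferometer I would not form the $2d\times 2d$ symplectic matrix and multiply. Instead, each beamsplitter or phase shifter acts on only $2$ modes, i.e.\ $4$ rows and columns. Applying it to $\Sigma$ as $S\Sigma S^T$ therefore costs $\mathcal{O}(d)$, and there are $\mathcal{O}(d^2)$ such gates, giving $\mathcal{O}(d^3)$ per interferometer. Single-mode squeezers and displacements cost $\mathcal{O}(d)$ each. Alternatively, one assembles $S$ as the product of the two interferometer matrices and a diagonal squeezing matrix and uses dense $2d\times 2d$ products, again $\mathcal{O}(d^3)$. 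Only $\bm{\mu}$, $\Sigma$ and the current layer matrix are stored, so memory is $\mathcal{O}(d^2)$. Summing over $L$ layers gives $\mathcal{O}(Ld^3)$.

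\emph{Gradients.} Using \cref{eq:mmd_parity}, I would approximate the loss by sampling $N$ subsets $A_1,\dots,A_N\sim p_{K_\sigma}$ from \cref{eq:p_K_sigma}. The target values $\expval{O_{A}}_p$ are estimated once from the data. The resulting loss is a composition of differentiable primitives: matrix products, Cholesky factorization, triangular solves, and the exponential. Reverse-mode automatic differentiation then returns the full gradient at a constant-factor overhead over the forward cost, $\mathcal{O}(Ld^3+\sum_i\ell_i^3)$.

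The main obstacle is memory during the backward pass, since naive backpropagation stores intermediates for all $L$ layers, i.e.\ $\mathcal{O}(Ld^2)$. I would handle this in one of two ways: by checkpointing, or by exploiting the invertibility of symplectic maps to recompute $\Sigma$ backwards, $\Sigma\mapsto S^{-1}\Sigma S^{-T}$, which keeps the working memory polynomial. Either way, the training procedure remains efficient.
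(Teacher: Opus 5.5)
Your proposal is correct and follows essentially the same route as the paper. Both use the closed-form Wigner-at-the-origin formula evaluated with a Cholesky factorization in $\mathcal{O}(\ell^3)$, and both use layer-wise in-place updates $\Sigma\mapsto S\Sigma S^T$ costing $\mathcal{O}(d^3)$ per layer with $\mathcal{O}(d^2)$ working memory. You go further than the paper's proof by actually arguing the backpropagation step, which the paper only asserts.

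One justification needs fixing: $\Sigma_A\geq\mathbbm{1}$ is false once squeezing is present, e.g.\ $\Sigma=\mathrm{diag}(e^{-2r},e^{2r})$. The positive definiteness your Cholesky step needs still holds: $\Sigma_A+i\Omega_A\geq 0$, being a principal submatrix of $\Sigma+i\Omega\geq 0$, and $\Omega_A$ is invertible.
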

    \begin{proof}
        Evaluating \Cref{eq:parity_exp_value} takes $\mathcal{O}(\ell^3)$ time, as the computation time is dominated by the determinant in the denominator.
        Storing all the data characterizing a Gaussian state requires $\mathcal{O}(d^2)$ memory, which is also sufficient for computing the final covariance matrix from the initial covariance matrix using in-place updates. Updating the covariance matrix requires $\mathcal{O}(L d^3)$ steps, because applying an interferometer to a Gaussian state amounts to an update of the covariance matrix requiring $\mathcal{O}(d^3)$ time.
    \end{proof}

    \noindent For completeness, we also provide a rough sketch of the proposed algorithm for estimating $\text{MMD}^2$ in \Cref{alg:parity_mmd_gbbm}. 
    The algorithm works by computing the mean vector $\bm{\mu}$ and the covariance matrix $\Sigma$ of the Gaussian state from the ansatz parameters $\bm{w}$. From the Gaussian state representation we can compute the expectation value of each parity string, while the target expectation values are obtained from the bitstrings sampled from the target distribution $p$. The average squared difference between the model and target expectation values yields the final \mmd\ estimate.

    \begin{algorithm}[H]
    \caption{Classical evaluation of the MMD$^2$ for a parity GBBM}
    \label{alg:parity_mmd_gbbm}
    \begin{algorithmic}[1]
    \Require
    Parameters $\bm{w} = (\bm{w}^{(1)}, \dots, \bm{w}^{(L)})$, with
    $\bm{w}^{(k)} = (\bm{\alpha}^{(k)}, \bm{\theta}_1^{(k)}, \bm{r}^{(k)}, \bm{\theta}_2^{(k)})$;
    bitstrings $\{\bm{x}^{(n)}\}_{n=1}^{N}$ sampled from the target distribution $p$;
    Gaussian kernel bandwidth $\sigma$;
    number of parity strings $M$.
    \Ensure Estimate of $\mathcal{L}_{\mathrm{MMD}}(p, q_{\bm{w}})$.
    
    \State $\bm{\mu}\gets \bm{0}$,\;\; $\Sigma\gets \mathbbm{1}_{2d}$ \Comment{Vacuum state initialization}
    \For{$k=1,\dots,L$} \Comment{Compute Gaussian state using \Cref{eq:gaussian_transformation}}
        \State $U_1^{(k)}\gets U(\bm{\theta}_1^{(k)})$, \;\; $U_2^{(k)}\gets U(\bm{\theta}_2^{(k)})$
        \State Construct the symplectic matrix $S^{(k)}$ from $(U_1^{(k)},\bm{r}^{(k)},U_2^{(k)})$
        \State Construct the displacement vector $\bm{t}^{(k)}$ from $\bm{\alpha}^{(k)}$
        \State $\bm{\mu}\gets S^{(k)}\bm{\mu}+\bm{t}^{(k)}$ \Comment{Evolve displacement vector}
        \State $\Sigma\gets S^{(k)}\, \Sigma \, (S^{(k)})^{T}$ \Comment{Evolve covariance matrix}
    \EndFor

    \State $S \gets 0$

    \For{$m=1,\dots,M$} \Comment{Compute parity string expectation values}
        \State Sample $A$ from $p_{K_\sigma}$ corresponding to a parity string
        \State $\expval{\Pi_{A}}_p \gets \frac{1}{N}\sum_{n=1}^{N}(-1)^{|\bm{x}_A^{(n)}|}$ \Comment{Compute target parity}
        \State Obtain $(\bm{\mu}_{A},\Sigma_{A})$ by restricting $(\bm{\mu},\Sigma)$ to modes in $A$
        \State $\expval{\Pi_{A}}_{q_{\bm{w}}} \gets
        \exp\!\left(-\bm{\mu}_{A}^{T}\Sigma_{A}^{-1}\bm{\mu}_{A}\right)\big/\sqrt{\det(\Sigma_{A})}$ \Comment{Compute model parity using \Cref{eq:parity_exp_value}}
        \State $S \gets S + \left[\expval{\Pi_{A}}_p - \expval{\Pi_{A}}_{q_{\bm{w}}}\right]^{2} $
    \EndFor
    
    \State \Return $S / M$ \Comment{Return the $\text{MMD}^2$ estimate}
    \end{algorithmic}
    \end{algorithm}

\section{Numerical experiments} \label{sec:benchmarks}
    In this section, we demonstrate the scalability of our generative model and training framework, benchmarking it against well-established classical methods. We first introduce all models and their tunable hyperparameters, followed by describing the benchmark datasets together with the corresponding training and test results. All \gbbm\ trainings rely on our Python implementation~\cite{gaussian_bosonic_born_machines_repository} written using JAX~\cite{jax2018github} for automatic differentiation. These demonstrations aim to study certain important aspects of our framework as well as to compare the performance to classical models. In our assessment of a trained models performance, we rely on two metrics: the covariance matrix of the corresponding probability distributions for visual investigations, and the \mmd\ evaluated with respect to a test set using different $\sigma$ bandwidths. The covariance matrix of a probability distribution $p$ (not to be confused with the covariance matrix of a Gaussian state) is defined elementwise as
    \begin{equation}
        [\operatorname{cov}_{\p}]_{ij} \coloneqq \mathbb{E}_{\bm{x} \sim p} [x_i x_j] - \mathbb{E}_{\bm{x} \sim p} [x_i] \; \mathbb{E}_{\bm{x} \sim p} [x_j].
    \end{equation}
    Since we do not have access to the true underlying probability distribution of the target datasets, but only to finite set of training and test samples, the visualized covariance matrices for the data are necessarily empirical estimates. In contrast, for a trained GBBM parameterized by a Gaussian state $\rho$, the exact analytical covariance matrix of the generated distribution can be computed directly from the expectation values of local operators.

    \subsection{Benchmark models}    \subsubsection{\gbbm\ models}
    We use a varying number of layers and interferometer layouts for the GBBM model as in \cref{def:model}, where the beamsplitter and phaseshifter angle parameters are initialized uniformly at random from $[0, 2\pi)$, and the displacement and squeezing amplitudes are drawn from a normal distribution $\mathcal{N}(0, 0.1)$. Besides choosing the right $\sigma$ bandwidths for the Gaussian kernel function from \Cref{eq:gaussian_kernel_func} (using the median heuristic from Ref.~\cite{garreau2017large}), here we also tune the learning rate and the number of training iterations. While compatible with both gradient-based and gradient-free optimizers, in our implementation, we tune the parameters using the Adam optimizer~\cite{adam2014method}. Our model relies primarily on terminal parity measurements, but we also showcase an alternative strategy in \cref{sec:threshold_experiment} based on threshold detection.
    
    \subsubsection{Restricted Boltzmann machine}
    As a classical baseline learning model, we consider the \textit{Restricted Boltzmann Machine} (RBM) architecture~\cite{ackley1985learning, murphy2012machine}. RBMs form a class of energy-based models defined on a bipartite graph comprising $n$ visible and $m$ latent binary variables. Given a weight matrix $W$ and bias vectors $\bm{a}, \bm{b}$, the energy of the system is given by
    \begin{equation}
    E(\bm{v},\bm{h}) = -\bm{a}^{T}\bm{v} -\bm{b}^{T}\bm{h} - \bm{v}W\bm{h},
    \end{equation}
    where $\bm{v}$ and $\bm{h}$ denote the state of the visible and hidden units. This energy function determines the joint probability distribution according to the Boltzmann distribution
    \begin{equation}
        q(\bm{v},\bm{h}) =  \frac{1}{Z} e^{-E(\bm{v},\bm{h})},
    \end{equation}
    where $Z$ is the normalization constant. The probability of observing a specific visible state $\bm{v}$ is obtained by marginalizing over the latent space:
    \begin{equation}
    q(\bm{v}) = \sum_{\bm{h}} q(\bm{v},\bm{h}).
    \end{equation}
    Our training procedure adopts the strategy from Ref.~\cite{recio2025train}, using persistent contrastive divergence via scikit-learn's \texttt{BernoulliRBM} class~\cite{scikit-learn, qml_benchmarks}. As a key hyperparameter, we scan the number of hidden units $m \in [\lfloor n/2 \rfloor, n]$ and choose the architecture that minimizes the \mmd\ metric accross various bandwidths $\sigma$. 

    \subsubsection{Empirical Chow-Liu tree approximation}
    To capture the basic statistics of the dataset, we construct a tree-structured Bayesian network using the Chow-Liu algorithm~\cite{chow1968approximating} via the \texttt{pgmpy} library~\cite{Ankan2024}. This method approximates the joint distribution by maximizing mutual information to form a maximum weight spanning tree, rooting it at the first variable, and fitting parameters through maximum likelihood estimation. The resulting tree topology allows for efficient and exact sampling because each node has a single parent. We evaluate the performance of the model by measuring the \mmd\ distance between a large, fixed-size set of samples and the original test set. Although this provides a computationally efficient baseline, its accuracy is limited to smaller problems or datasets that align with a tree-like correlation structure.

    \subsection{Exploring overparametrization and alternative ans\"atze}
    \begin{figure}[t]
        \centering
        \includegraphics[width=0.9\linewidth]{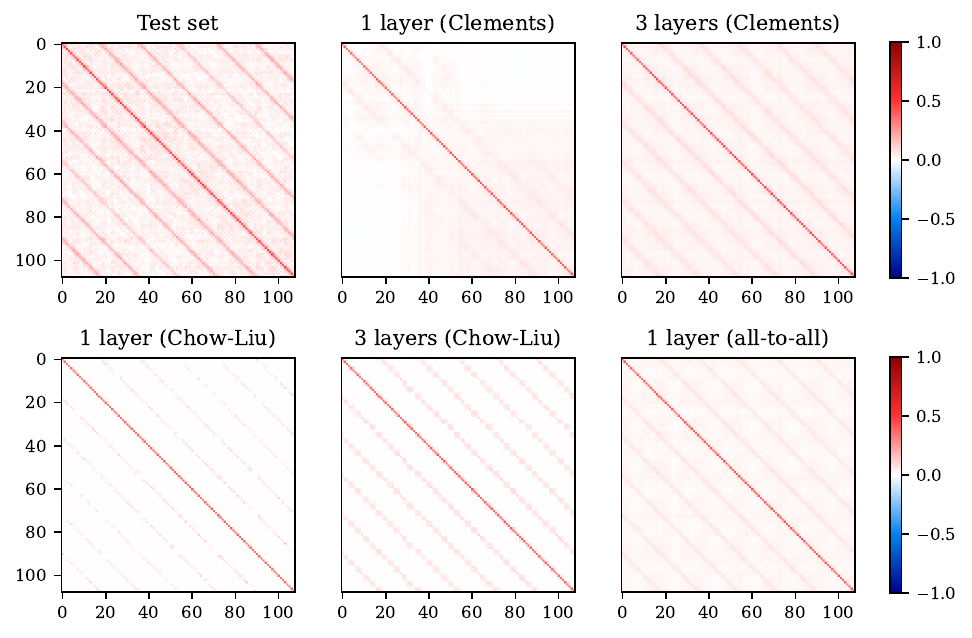}
        \caption{\textbf{The effect of ansatz choice on the performance \gbbm s.} The models were trained on the equilibrium states of a cellular automaton over a $6\times 12$ grid. \gbbm s were constructed with an increasing number of layers and different interferometer layouts (Clements, Chow-Liu and all-to-all layouts). }
        \label{fig:ca_experiment}
    \end{figure}

    Similarly to the method described in Ref.~\cite{bakó2025fermionicbornmachinesclassical}, we start by investigating the effect of including multiple layers in the \gbbm\ model, but also extend it to alternative interferometer layouts. 
    For this, we consider a dataset constructed from the equilibrium states of a \textit{cellular automaton} (CA) on a $6 \times 18$ grid. The samples were obtained by starting the CA in different uniformly random states and evolving them for $1000$ steps under Conway's Game of Life rules~\cite{adamatzky2010game}, then storing the final state. This process was repeated several times to collect a training and test set of $1000$ samples each, after disregarding the all-$0$ states. We present a few examples from the training dataset in Appendix~\ref{appendix:training_examples}.

    The GBBMs were trained for $5000$ episodes with an initial learning rate of $0.001$, using two kernel bandwidths $\sigma \in \{ 8.72, 17.44 \}$. We start by studying the information spreading in the \gbbm\ ansatz and analyze the covariance matrix of the trained models. We consider $3$ distinct interferometer layouts in our experiments. The first is based on the Clements decomposition described in \cref{ssec:model_construction}; a single layer having $23544$ trainable parameters in this case ($108$ modes). The other two layouts are constructed from a pre-specified graph by applying beamsplitters between mode pairs corresponding to the ordered set of edges, inspired by the probabilistic graphical model approach of \cite{bako2024problem}.
    We consider two extreme cases: a tree graph obtained using the Chow-Liu approximation of the dataset and a complete graph, where the edge-ordering corresponds to a uniformly random permutation. A single layer in the ansatz constructed using these graphs admits $860$ and $11664$ parameters, respectively.
    
    The covariance matrices of the trained models are shown in \cref{fig:ca_experiment}, where Chow-Liu and all-to-all refer to the tree and complete graph interferometer layouts. These results signify the importance of the choice of parametrization of Gaussian states in the context of generative learning. 
    The observation that a single layer obtained using the Clements layout is close to diagonal and that this can be improved by including multiple layers is consistent with the fermionic case, noted by the authors in Ref.~\cite{bakó2025fermionicbornmachinesclassical}.
    In the current work, however, we extend these results by showing that alternative parametrizations (with significantly less parameters) can be used to circumvent this unwanted concentration and improve information spreading during training. Even a single layer with the Chow-Liu interferometer layout lacks the covariance concentration behaviour exhibited by the Clements layout (while potentially introducing other biases), despite a $27$-fold reduction of the number of trainable parameters.

    \subsection{USPS handwritten digit experiment}
    For our second experiment, we consider $16\times16$ grayscale images from the USPS dataset of handwritten digits~\cite{hull1994database}. The images are flattened and binarized to obtain bitstrings of length $256$ and divided into a training set of size $19999$ and a test set of $1500$ samples. A few examples of the binarized training set are shown in Appendix.~\ref{appendix:training_examples}.

    The fine-tuned \gbbm\ has $3$ layers of the all-to-all interferometer layout as in the previous section, having $394752$ trainable parameters in total. The model was trained with a learning rate of $0.001$ for $5000$ episodes. The \mmd\ loss was estimated for three bandwidths $\sigma \in \{6.16, 12.33, 24.66\}$ using $10^4$ parity strings each. For comparison, we trained an RBM with $188$ hidden units, where this number was chosen as the best performing in the tuning range. The test results, obtained using $5$ independent estimates for each $\sigma$ bandwidths, are shown in \cref{fig:image_experiment}. Our \gbbm\ model achieves good results, outperforming the fine-tuned RBM model. The Chow-Liu approximation completely fails, showing worse \mmd\ values than the uniformly random samples, as the dataset does not exhibit a strong tree correlation structure.

    \begin{figure}[t]
    \centering
    \includegraphics[width=0.9\linewidth]{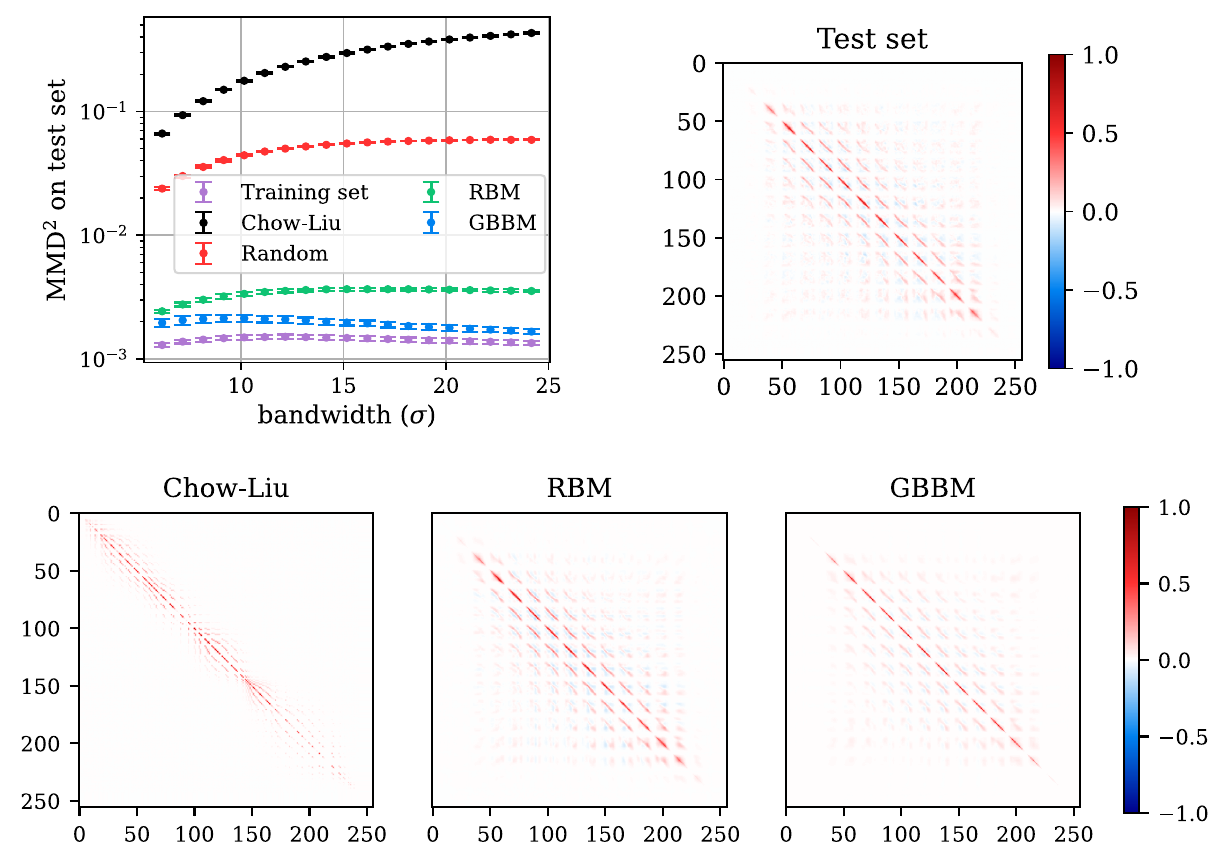}
    \caption{\textbf{Test results on the USPS dataset.} The dataset containing $16 \times 16$ grayscale images was converted into $256$-long bitstrings. A \gbbm\ was trained with $3$ layers of the all-to-all interferometer layout and the fine-tuned RBM was trained with $188$ hidden units for comparison. Markers denote mean values from $5$ independent estimates and errorbars denote the corresponding standard deviations.}
    \label{fig:image_experiment}
\end{figure}

    \subsection{Genomic experiment}

    \begin{figure}[t]
        \centering
        \includegraphics[width=.9\linewidth]{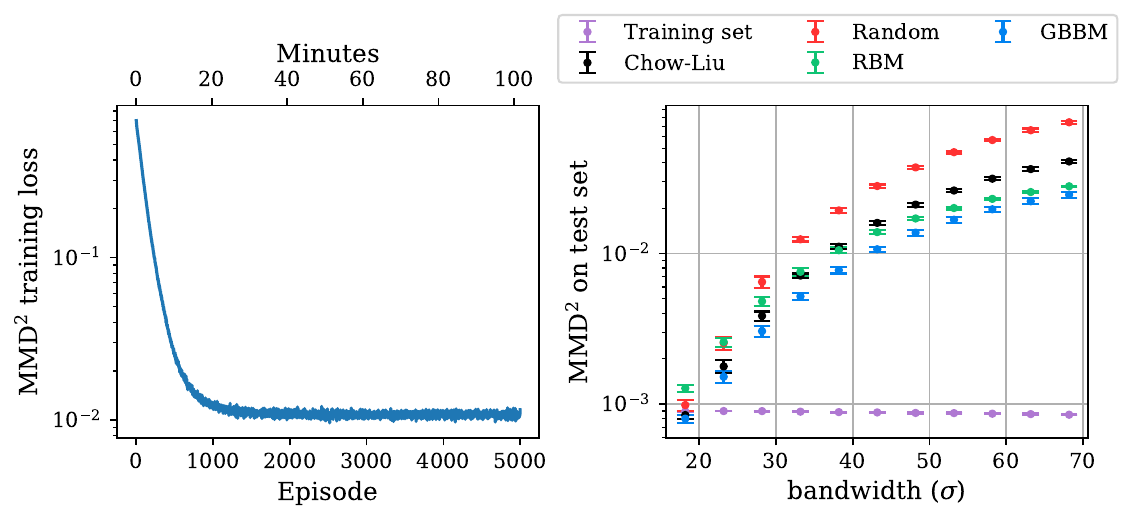}
        \caption{\textbf{Test results on the genomic dataset.} The models were trained on $805$-bit long bitstrings. GBBMs have a single layer with $1297660$ trainable parameters, and we trained an RBM with $492$ hidden units for comparison. Markers denote mean values from $5$ independent estimates and errorbars denote the corresponding standard deviations.}
        \label{fig:genomic_experiment}
    \end{figure}

    For a large scale benchmark, we consider a genomic dataset containing bitstrings of length $805$, where each bit in the sample indicates the presence or absence of the variant allele in the given individual~\cite{recio2025train, bowles2025genomic, yelmen2021creating}. This dataset contains $5008$ samples collected from $2004$ individuals from the Human Genome Project and is divided into training and test datasets with a ratio of $1\!:\!3$.

    We first train a GBBM with $805$ photonic modes using a single layer with the Clements interferometer layout, having $1{,}297{,}660$ trainable parameters. The learning rate is set to $10^{-3}$, and the model is trained for $5000$ episodes, although convergence is reached in less than $2000$ steps. In each training step, we sample $3\times 10^4$ parity strings from \Cref{eq:p_K_sigma} employing $3$ different kernel bandwidths $\sigma \in \{18.11, 36.22, 72.44\}$ obtained using the median heuristic~\cite{garreau2017large} and use the corresponding expectation values to compute the \mmd\ estimate. Notably, the entire training process completed in approximately $100$ minutes using a single NVIDIA A100 GPU. The training curve in both the training episodes and time is shown in \cref{fig:genomic_experiment} (left).

    For comparison, we also train a RBM with $492$ hidden units selected via hyperparameter tuning. The RBM is trained for $1000$ episodes with a learning rate of $10^{-2}$.
    In addition, we also consider the Chow-Liu approximation of the dataset and uniformly random samples as comparisons, computing \mmd\ to the test set based on $10^5$ samples. Finally, we also compute the \mmd\ between the training set and the test set, serving as the ground truth reference.

    After training, we evaluate the \mmd\ of each model with respect to the test set for a range of $\sigma$ kernel bandwidths and display the results in \cref{fig:genomic_experiment} (right). We emphasize that all these discrepancies were evaluated using the \mmd\ formulated as expectation values from \Cref{eq:mmd_parity}. Notably, our GBBM outperforms statistical approximations (Chow-Liu tree and uniformly random samples) and the RBM in the relevant bandwidth regime using a comparable hyperparameter tuning.

\subsection{Threshold GBBMs can learn near-balanced datasets} \label{sec:threshold_experiment}

    There is a notable limitation of parity GBBMs arising from the nature of the parity operator. Since expectation values of parity-operator strings correspond to the central values of reduced Wigner functions, they can only take positive values. Negative values can be obtained by flipping the dataset; therefore, when the dataset is dominated by low or high Hamming weight samples, the model is expected to perform well, as demonstrated previously. However, for balanced datasets containing comparable numbers of $0$s and $1$s, the parity-based model is not sufficient on its own, as the expectation values cannot attain zero, and approximating zero requires higher energy. Although this limitation can be partially mitigated by classical postprocessing, we instead consider an alternative coarse-graining of particle-number measurement outcomes based on threshold detection~\cite{Bulmer_2022_threshold}.
    This threshold detection-based variant is expected to provide significantly improved performance on near-balanced datasets. The classical algorithm for estimating threshold-string expectation values is detailed in Appendix~\ref{sec:classical_training_algorithm_threshold}, where we demonstrate that the enhanced expressivity of the model involves a trade-off regarding the scalability of the classical training process.

To demonstrate a generative learning problem, where the parity \gbbm\ fails, we construct a balanced dataset using the Metropolis algorithm on a classical 2D Ising model with periodic boundary conditions. The coupling strength is set to $J=1$ and we used an alternating local field with magnitude $h=0.08$ on a $14\times14$ grid with periodic boundary condition. First, the model was initialised in a random state and evolved using the Metropolis algorithm for $10^6$ steps. After this warmup, a sample was collected by taking a snapshot of the spin configuration after every $2000$ steps until a dataset of $15 \cdot 10^3$ samples were collected which we divided into a training and test set with a $2:1$ split. Crucially, the algorithm requires a temperature which we set to $T=2.4$, close to the critical temperature of the 2D Ising model. As shown in \cref{fig:ising_experiment} (top, left), the resulting dataset is not concentrated only on the low Hamming weight subspace. 

\begin{figure}[t]
    \centering
    \includegraphics[width=0.9\linewidth]{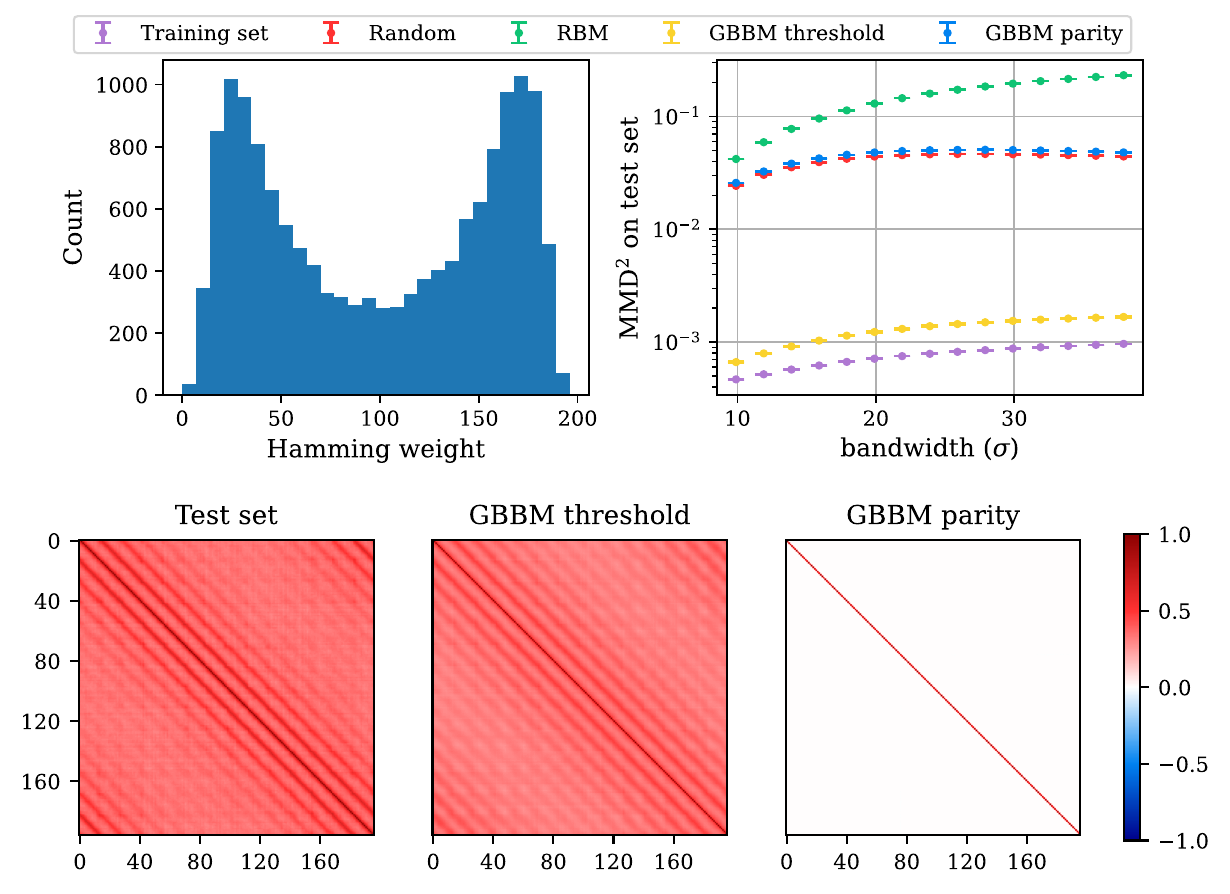}
    \caption{\textbf{Training parity and threshold \gbbm s on the Ising dataset.} The models were trained with at most $7$-long operator-string expectation values. (top, left) Distribution of the Hamming weight of training and test samples. (top, right) Truncated \mmd\ scores of the trained models evaluated up to locality $7$. (bottom) Covariance matrices of the test set and the trained \gbbm s.}
    \label{fig:ising_experiment}
\end{figure}

We train the \gbbm s and the RBM for $5000$ episodes with a learning rate of $0.001$ using $3$ bandwidths $\sigma \in \{10.58, 21.16, 42.33\}$ derived from the median heuristic. However, since the runtime of the classical simulation algorithm depends exponentially on the locality of the threshold operator strings, we impose a cutoff at $\ell =7$. We use this limit for training both types of \gbbm s, and testing all models. Both \gbbm s were constructed with $5$ layers of the Clements ansatz.
The \mmd\ scores and the final covariance matrices of the trained models are shown in \cref{fig:ising_experiment}. The parity \gbbm\ fails to learn this dataset and produces similar results as uniformly random sampling. The tuned RBM performs even worse, but this can be associated with the limitations of contrastive divergence. Notably, this generative problem---close to the critical temperature---is notoriously hard for Boltzmann machines trained with contrastive divergence~\cite{morningstar2018deep}. On the other hand the threshold \gbbm\ shows good results and succesfully learns to capture patterns in the dataset.

\section{Conclusion and outlook}\label{sec:conclusion}

    In this work, we introduced a parity operator-based training framework for bosonic Gaussian quantum models that enables efficient classical evaluation of the $\text{MMD}^2$ loss while retaining a direct connection to experimentally realizable photonic platforms. Our parity-based approach imposes no intrinsic upper bound on the length of parity strings, with the computational cost governed instead by the number of modes and circuit layers. In particular, we demonstrated the practical scalability of the method by successfully training a model on a classical computer with up to $805$ bosonic modes and $1{,}297{,}660$ trainable parameters, using a single GPU in less than $2$ hours. As Gaussian Boson Sampling experiments with thousands of modes are already available on current photonic hardware, our proposed framework is directly relevant to near-term implementations.
    Moreover, in such generative tasks, a key practical advantage of photonic platforms over qubit-based architectures could be their inference throughput: while current qubit-based systems typically operate in the kilohertz regime, photonic implementations are expected to support substantially higher sampling rates~\cite{cogan2021deterministicsourceindistinguishablephotons,liu2025robustquantumcomputationaladvantage}.

    At the same time, realistic experimental conditions impose important constraints. In present-day photonic platforms, optical losses and other noise sources reduce the strength of observable correlations~\cite{dodd2025fastfrugalgaussianboson}, which can limit the effective expressivity of the model. Nevertheless, training can still be performed in the presence of losses and learning can remain meaningful even when experimental imperfections are taken into account. Moreover, Gaussian Boson Sampling is believed to remain classically hard even in the presence of moderate loss~\cite{go2025sufficientconditionshardnesslossy}, suggesting that useful generative models may be achievable without fully ideal hardware. Further reductions in experimental loss would make richer correlations accessible and could therefore improve the practical performance of the model.

    Looking ahead, several directions merit further investigation. While parity and threshold operators constitute a robust and experimentally natural choice of observables, they capture only the coarsest nontrivial features of the resulting photon-number distribution. A natural generalization is to replace the operator strings in our framework with number-resolving observables, leading to a framework based on $n$-ary digits instead of bits. Understanding which classes of such observables admit efficient classical evaluation is an important open question. More broadly, it remains to be explored whether restricted forms of non-Gaussianity can be incorporated into the model, while preserving classical trainability. Additionally, one could shift the underlying quantum advantage scheme entirely; for instance, recent work has explored analogous generative frameworks based on standard Boson Sampling rather than Gaussian Boson Sampling~\cite{kurkin2026, gottlieb2026efficient}.
    In general, identifying the boundary between enhanced expressivity and classical intractability will be crucial for the development of scalable and practically relevant photonic generative models.

\section{Acknowledgement}

    ZK, BB and ZZ would like to thank the support of the Hungarian National Research, Development and Innovation Office (NKFIH) through the KDP-2023 funding scheme, the Quantum Information National Laboratory of Hungary and the grants TKP-2021-NVA-04, TKP2021-NVA-29 and FK 135220. 
    MO acknowledges the support from the European Union’s Horizon Europe research and innovation program under EPIQUE Project GA No. 101135288.
    CO was supported by Global Partnership Program of Leading Universities in Quantum Science and Technology (RS-2025-08542968) through the National Research Foundation of Korea (NRF) funded by the Korean government (Ministry of Science and ICT(MSIT)).
    CO was also supported by the National Research Foundation of Korea Grants (No. RS-2024-00431768 and No. RS-2025-00515456) funded by the Korean government (Ministry of Science and ICT (MSIT)) and the Institute of Information \& Communications Technology Planning \& Evaluation (IITP) Grants funded by the Korea government (MSIT) (No. IITP-2025-RS-2025-02283189 and IITP-2025-RS-2025-02263264).
    ZZ was partially supported by the Horizon Europe programme HORIZON-CL4-2022-QUANTUM-01-SGA via the project 101113946 OpenSuperQPlus100 and the QuantERA II project HQCC-101017733.
    ZZ also acknowledges funding from the Research Council of Finland through the Finnish Quantum Flagship project 358878.
    The C4QEC project is carried out within the IRAP of the Foundation for Polish Science co-financed by the European Union.
    The authors also acknowledge the computational resources provided by the Wigner Scientific Computational Laboratory (WSCLAB).

\newpage

\appendix

\section{Recap: Gaussian states}\label{app:gaussian}

    The current Appendix provides a brief overview of linear optics with an emphasis on Gaussian states and transformations; for comprehensive introductions, see Refs.~\cite{adesso:2014,serafini}.

    A Gaussian state $\rho$ is a photonic quantum state fully characterized by the first and second moments of the quadrature operators $\hat{x}_1,\dots,\hat{x}_d,\hat{p}_1,\dots,\hat{p}_d$.
    These moments are collected into the mean vector $\bm{\mu}\in\mathbb{R}^{2d}$ and the covariance matrix $\Sigma\in\mathbb{R}^{2d\times 2d}$, defined elementwise as
    \begin{subequations}
    \begin{align}
        \mu_{i} &\coloneqq \Tr \left [ \rho \, \hat{\chi}_i  \right ], \\
        \Sigma_{ij} &\coloneqq
        \Tr \left [ \rho \, \{ \hat{\chi}_i - \mu_i \mathbbm{1}, \hat{\chi}_j - \mu_j \mathbbm{1} \} \right ],
    \end{align}
    \end{subequations}
    where $\bm{\chi} = [\hat{\chi}_1, \dots, \hat{\chi}_{2d}]^T$ is the formal vector of quadrature operators defined as 
    \begin{equation}
        \bm{\chi} \coloneqq  [
            \hat{x}_1,
            \dots,
            \hat{x}_d,
            \hat{p}_1,
            \dots,
            \hat{p}_d
        ]^T,
    \end{equation}
    and $d$ is the number of bosonic modes.
    The quadrature operators satisfy the canonical commutation relations 
    \begin{equation}\label{eq:ccr}
        [\hat{\chi}_i, \hat{\chi}_j] = i \Omega_{ij},
    \end{equation}
    where $\Omega$ is the canonical symplectic form
    \begin{equation}
        \Omega \coloneqq \begin{bmatrix} 0 & \mathbbm{1}_{d \times d} \\ -\mathbbm{1}_{d \times d} & 0 \end{bmatrix}.
    \end{equation}
    For $\Sigma$ to describe a physical quantum state, it must be symmetric and satisfy the Robertson-Schrödinger uncertainty relation~\cite{serafini}
    \begin{equation}
        \Sigma + i\Omega \ge 0.
    \end{equation}

    The bridge between the operator formalism and the phase-space representation of quantum optics is provided by the Weyl operator (or displacement operator) $D(\bm{\xi}) = \exp(i \bm{\xi}^T \Omega \bm{\chi})$. The phase-space description of Gaussian states is conveniently expressed in terms of the Weyl operators as follows.
    The Gaussian nature of the quantum state implies that its characteristic function, $\chi_\rho(\bm{\xi}) = \Tr[\rho D(\bm{\xi})]$, takes a quadratic exponential form.
    The Wigner function $W_\rho(\bm{r})$ is the symplectic Fourier transform of the characteristic function, which---up to normalization---represents the distribution of a photonic quantum state in phase space~\cite{wigner_function}. For a Gaussian state characterized by a mean vector $\bm{\mu}$ and a covariance matrix $\Sigma$, the Wigner function is proportional to the displaced multivariate Gaussian distribution~\cite{serafini} and has the form
    \begin{equation}\label{eq:wigner_gaussian}
        W_{\rho}(\bm{r}) =
         \frac{2^d }{\pi^{d}\sqrt{\det \Sigma}}
        e^{- (\bm{r} - \bm{\mu})^T \Sigma^{-1} (\bm{r} - \bm{\mu})},
    \end{equation}
    where $\bm{r} \in \mathbb{R}^{2d}$ represents the phase-space coordinates. Hence, as expected, the mean vector $\bm{\mu}$ represents the displacement of the Gaussian state in the phase space and $\Sigma$ its covariance.
    This representation highlights that Gaussian states are the ``most classical'' of quantum states in the sense that their Wigner functions are non-negative, allowing them to be interpreted as valid probability distributions over phase space.

    We call Gaussian transformations those unitary operations that preserve the Gaussian character of photonic quantum states. Equivalently, in the Heisenberg picture, they act on the vector of canonical ladder (or quadrature) operators as symplectic affine transformations.
    More concretely, the Heisenberg action of the unitary \(U\) is given by
    \begin{equation}
        U^\dagger \bm{\chi} U
        =
        S\,\bm{\chi} + \bm{r}\,\mathbbm{1},
    \end{equation}
    where \(S\) is a real symplectic matrix describing the interferometer and squeezing operations, and \(\bm{r} \in \mathbb{R}^{2d}\) is a displacement vector.
    The symplectic matrix \(S\) satisfies the constraint
    \begin{equation}
        S \Omega S^{T} = \Omega,
    \end{equation}
    ensuring the preservation of the canonical commutation relations from \Cref{eq:ccr}.
    As a direct consequence, Gaussian states are closed under such transformations, and their evolution is completely characterized at the level of first and second moments. Hence, the transformation of a mean vector $\bm{\mu}$ and a covariance matrix $\Sigma$ under a Gaussian transformation \(U\) reads
    \begin{subequations}    
        \begin{align}\label{eq:gaussian_time_evolution}
        \bm{\mu} &\mapsto S\,\bm{\mu} + \bm{r}, \\
        \Sigma &\mapsto S\,\Sigma\,S^{T}.
        \end{align}
    \end{subequations}

\section{Photonic MMD\texorpdfstring{$^2$}{2} loss from operator string expectation values}\label{app:mmd_loss_photonic}
    This Appendix contains a derivation of the expression for the squared maximum mean discrepancy (MMD$^2$) loss in the photonic setting in terms of expectation values of operator strings. The resulting form is the bosonic analog of the qubit Pauli-$Z$ string representation from Ref.~\cite{rudolph2024trainability}, and parallels the fermionic translation from Ref.~\cite{bakó2025fermionicbornmachinesclassical}.
    Although this thought carries out verbatim by replacing the Pauli-$Z$ operator to its bosonic analog in the derivation of Ref.~\cite{rudolph2024trainability} for qubit-based systems, we include a self-contained adaptation for completeness.

    We consider a binary projective measurement on each mode \(j\), with local projectors
    \(
    M_0^{(j)}
    \)
    and
    \(
    M_1^{(j)}
    \),
    and denote the corresponding global projectors by
    \begin{equation}
    M_{\bm{x}} \coloneqq \bigotimes_{j=1}^d M_{x_j}^{(j)},
    \qquad
    \bm{x}\in\{0,1\}^d.
    \end{equation}
    For every subset \(A\subseteq[d]\), we define the associated operator string
    \begin{equation}
        O_A
        \coloneqq
        \sum_{\bm{x}\in\{0,1\}^d} (-1)^{|\bm{x}_A|} M_{\bm{x}}
        =
        \bigotimes_{j\in A}\!\left(M_0^{(j)}-M_1^{(j)}\right)
        \otimes
        \mathbbm{1}_{[d]\setminus A},
    \end{equation}
    where \(\bm{x}_A\) denotes the restriction of \(\bm{x}\) to the indices in \(A\), and \(|\bm{x}_A|\) is its Hamming weight.
    
    Let \(p\) and \(q\) be two probability distributions on \(\{0,1\}^d\). The squared maximum mean discrepancy MMD$^2$ between them is
    \begin{equation} \label{eq:mmd_app}
        \mathrm{MMD}^2(p,q)
        \coloneqq
        \mathbb{E}_{\bm{x},\bm{x}'\sim p}\!\left[K(\bm{x},\bm{x}')\right]
        +
        \mathbb{E}_{\bm{y},\bm{y}'\sim q}\!\left[K(\bm{y},\bm{y}')\right]
        -
        2\mathbb{E}_{\bm{x}\sim p,\bm{y}\sim q}\!\left[K(\bm{x},\bm{y})\right],
    \end{equation}
    For the Gaussian kernel
    \begin{equation}
    K_{\sigma}(\bm{x},\bm{y})
    =
    \exp\!\left(-\frac{\|\bm{x}-\bm{y}\|_2^2}{2\sigma}\right)
    =
    \prod_{i=1}^d
    \exp\!\left(-\frac{(x_i-y_i)^2}{2\sigma}\right),
    \end{equation}
    let us define
    \begin{equation}
    p_{\sigma}\coloneqq \frac12\left(1-e^{-1/(2\sigma)}\right).
    \end{equation}
    Then one readily verifies that
    \begin{equation}\label{eq:singlebit_kernel_walsh}
        \exp\!\left(-\frac{(x_i-y_i)^2}{2\sigma}\right)
        =
        (1-p_{\sigma}) + p_{\sigma}(-1)^{x_i+y_i}.
    \end{equation}
    Substituting this identity into the product form of the kernel yields the subset expansion
    \begin{equation}\label{eq:gaussian_kernel_subset_expansion}
        K_{\sigma}(\bm{x},\bm{y})
        =
        \sum_{A\subseteq[d]}
        p_{K_\sigma}(A)\,
        (-1)^{|\bm{x}_A|}\,
        (-1)^{|\bm{y}_A|},
    \end{equation}
    where
    \begin{equation}
        p_{K_\sigma}(A)
        \coloneqq
        (1-p_{\sigma})^{d-|A|}p_{\sigma}^{|A|}.
    \end{equation}
    Since
    \(
    \sum_{A\subseteq[d]} p_{K_\sigma}(A)=1
    \),
    this defines a probability distribution over subsets of \([d]\).
    It is therefore natural to introduce the following:
    \begin{equation}
        \widehat p(A)
        \coloneqq
        \sum_{\bm{x}\in\{0,1\}^d} p(\bm{x})\,(-1)^{|\bm{x}_A|},
        \qquad
        \widehat q(A)
        \coloneqq
        \sum_{\bm{y}\in\{0,1\}^d} q(\bm{y})\,(-1)^{|\bm{y}_A|}.
    \end{equation}
    Using \Cref{eq:gaussian_kernel_subset_expansion}, we obtain
    \begin{equation}
        \mathbb{E}_{\bm{x}\sim p,\bm{y}\sim q}\!\left[K_{\sigma}(\bm{x},\bm{y})\right]
        =
        \sum_{A\subseteq[d]} p_{K_\sigma}(A)\,\widehat p(A)\,\widehat q(A).
    \end{equation}
    Applying the same identity to all three terms in the definition of \(\mathrm{MMD}^2\), we arrive at
    \begin{equation}
        \mathrm{MMD}^2(p,q)
        =
        \sum_{A\subseteq[d]}
        p_{K_\sigma}(A)\,
        \bigl(\widehat p(A)-\widehat q(A)\bigr)^2
        =
        \mathbb{E}_{A\sim p_{K_\sigma}}
        \!\left[
            \bigl(\widehat p(A)-\widehat q(A)\bigr)^2
        \right].
    \end{equation}
    
    Now let \(\rho_p\) and \(\rho_q\) be quantum states that reproduce the distributions \(p\) and \(q\) using the binary projective measurement \(\{M_{\bm{x}}\}_{\bm{x}\in\{0,1\}^d}\), i.e.,
    \begin{equation}
        \Tr[\rho_p M_{\bm{x}}]=p(\bm{x}),
        \qquad
        \Tr[\rho_q M_{\bm{x}}]=q(\bm{x}).
    \end{equation}
    Then, by the definition of \(O_A\), we can write that
    \begin{equation}
        \Tr[\rho_p O_A]
        =
        \sum_{\bm{x}\in\{0,1\}^d}
        (-1)^{|\bm{x}_A|} p(\bm{x})
        =
        \widehat p(A),
    \end{equation}
    and similarly
    \begin{equation}
        \Tr[\rho_q O_A]
        =
        \sum_{\bm{x}\in\{0,1\}^d}
        (-1)^{|\bm{x}_A|} q(\bm{x})
        =
        \widehat q(A).
    \end{equation}
    Hence, the MMD\(^2\) loss can be written as
    \begin{equation}
        \mathcal{L}_{\mathrm{MMD}^2}(p,q)
        =
        \mathbb{E}_{A\sim p_{K_\sigma}}
        \!\left[
            \bigl(
                \expval{O_A}_{\rho_p}
                -
                \expval{O_A}_{\rho_q}
            \bigr)^2
        \right].
    \end{equation}

\section{Parity operator expectation values of Gaussian states}\label{app:parity_operator_exp_values}
    In this Appendix, we derive the exact formula for the parity operator expectation values for Gaussian states for completeness. The expectation value of the parity operator strings correspond to the central values of the Wigner function~\cite{cahillglauber69}, a well-known fact used extensively in Wigner tomography~\cite{hofheinz2009synthesizing,vlastakis2013deterministically}.

    Let us start by computing the characteristic function of the single-mode parity operator $(-1)^{\hat{n}}$:
    \begin{equation}
        \chi_{(-1)^{\hat{n}}}(x, p) \coloneqq \Tr [ P D(x, p)] = e^{- \frac{x^2 + p^2}{2}}  \sum_{n=0}^\infty (-1)^n L_n(x^2 + p^2) = \frac{1}{2},
    \end{equation}
    where $D(x, p)$ denotes the single-mode displacement operator, $L_n$ denotes the Laguerre polynomial of degree $n$ and we have used the well-known identity
    \begin{equation}
        \sum_{n=0}^\infty (-1)^n L_n(x^2 + p^2) = \frac{1}{2}e^{\frac{x^2 + p^2}{2}}.
    \end{equation}
    Now, consider the global parity operator $\Pi = (-1)^{\hat{n}_{1} + \dots + \hat{n}_{d}}$ defined on a system with $d$ modes. The characteristic function of this operator is simply just
    \begin{equation}
        \chi_{\Pi} (\bm{\xi}) =
        \Tr [\Pi D(\bm{\xi})] = \frac{1}{2^d},
    \end{equation}
    where $D(\bm{\xi})$ denotes the multimode displacement operator.
    Now consider a $d$-mode photonic quantum state $\rho$. Its parity expectation value is
    \begin{align}\begin{split}
        \Tr [ \Pi \rho] 
        &= 
        \frac{1}{\pi^d} \int_{\mathbb{C}^d} \chi_{\rho}(\bm{\xi}) \Tr[\Pi D(-\bm{\xi})] \dd^{2d} \bm{\xi}
        = 
        \frac{1}{(2\pi)^{d}} \int_{\mathbb{C}^d} \chi_{\rho}(\bm{\xi}) \dd^{2d} \bm{\xi}
        =
        \left(\frac{\pi}{2}\right)^d W_{\rho}(0).
    \end{split}\end{align}
    Moreover, if $\rho$ is also a Gaussian state and is parametrized by the mean vector $\bm{\mu}$ and covariance matrix $\Sigma$, we can write
    \begin{equation}
         \Tr [ \Pi  \rho] = 
         \left(\frac{\pi}{2}\right)^d W_{\rho}(\bm{0})
         = \frac{ \exp(- \bm{\mu}^T \Sigma^{-1} \bm{\mu}) }{\sqrt{\det \Sigma}}.
    \end{equation}
    Geometrically, the expectation value of the parity operator is proportional to the value of the Wigner function in the center.
    We note that the expectation value of a parity string corresponding to a subsystem labeled by $A = \{ i_1, \dots, i_\ell \}$ is just the expected parity of the state reduced to that subsystem. More concretely,
    writing
    \begin{equation}
        \Pi_{A} \coloneqq (-1)^{\hat{n}_{i_1} + \dots + \hat{n}_{i_\ell}},
    \end{equation}
    we get that
    \begin{equation}
        \Tr [ \Pi_{A}\, \rho] = \Tr [ \Pi \Tr_{[n]\setminus A} (\rho)] = 
        \frac{ \exp(- \bm{\mu}_{A}^T \Sigma_{A}^{-1} \bm{\mu}_{A}) }{\sqrt{\det \Sigma_{A}}},
    \end{equation}
    where $\bm{\mu}_{A}$ and $\Sigma_{A}$ are the mean vector and covariance matrix corresponding to $\Tr_{[n]\setminus A} [\rho]$, obtained by removing the rows (and columns) corresponding to $[n] \setminus A$.
    
    Of course, calculating the expected parity of an arbitrary Gaussian state is tractable on a classical computer, but we would like to provide the best possible algorithm in order to push the limits of our proposed algorithm.
    In our implementation, the time complexity of calculating the parity of a $d$-mode Gaussian state is $\mathcal{O}(d^3)$, dominated by the Cholesky factorization of the covariance matrix, which involves roughly $d^3/3$ FLOPs. From the Cholesky factor, one can easily obtain both the determinant of $\Sigma$ and the exponent $\bm{\mu}^T \Sigma^{-1} \bm{\mu}$.

\section{Classical simulation of parity Gaussian Boson Sampling}
    In this Appendix, we derive a closed expression for parity-outcome probabilities of general Gaussian states and analyze their classical evaluation. We show how parity-string expectation values can be precomputed and reused in a mode-by-mode sampling scheme, discuss the resulting computational scaling, and identify lossy regimes where parity-based sampling may become competitive with existing classical simulation methods.
    
    Detecting parity $\bm{x}$ in a Gaussian state $\rho$ characterized by a mean vector $\bm{\mu}$ and a covariance matrix $\Sigma$ has probability
    \begin{align}\begin{split} \label{eq:parity_probability}
        p(\bm{x}) &= \Tr \left[
            \rho \bigotimes_{i=1}^d \left(\frac{1 + (-1)^{x_i} (-1)^{\hat{n}_i}}{2}\right)
        \right]
         = \frac{1}{2^d} \sum_{S \subseteq [d]} (-1)^{|\bm{x}_{S}|} \Tr \left[
            \Pi_{S} \rho
         \right]
         \\&= 
         \frac{1}{2^d} \sum_{S \subseteq [d]} (-1)^{ |\bm{x}_{S}|} \frac{ \exp(- \bm{\mu}_{S}^T \Sigma_{S}^{-1} \bm{\mu}_{S}) }{\sqrt{\det \Sigma_{S} }}.
    \end{split}
    \end{align}
    As the same parities can be reused during the mode-by-mode sampling algorithm, it is best to have all the parities calculated in advance for all the $2^d$ possible subsystems, which naively requires at most $\mathcal{O}(d^3 2^d)$ resources. However, due to the data dependency of the Cholesky factorization detailed in Ref.~\cite{kaposi2022polynomialspeeduptorontoniancalculation}, the algorithm does not need to redo the full Cholesky factorization from scratch for every summand. With this strategy, we expect a significant polynomial improvement in the time complexity. Moreover, we can reuse the previously calculated probability during mode-by-mode sampling using
    \begin{equation}
        p(x_n, x_{n-1}, \dots, x_1) = p(x_{n-1}, \dots, x_1) - \frac{1}{2^{n-1}} \sum_{S \subseteq [n-1]} (-1)^{|\bm{x}_{S}|} \Tr \left[
            \Pi_{S \cup \{n\}} \rho
         \right].
    \end{equation}
    However, we note that for simulating ideal Gaussian Boson Sampling, combining coarse-graining with current state-of-the-art classical simulation algorithms for particle-number resolved detection, such as the approach presented in Ref.~\cite{Bulmer_2022} with cost scaling as $\tilde{\mathcal{O}}(2^{d/2})$ in the worst case, are more efficient than using \Cref{eq:parity_probability} directly.

    Notably, the last expression is similar to the loop torontonian $\operatorname{ltor}$ introduced in Ref.~\cite{quesada2021quadratic}. In fact, one can rewrite \Cref{eq:parity_probability} for the special case of $\bm{x} = \bm{1} \coloneqq [1, \dots, 1]^T$ as
    \begin{equation}
        p(\bm{1}) = \frac{(-1)^d \; e^{-\bm{\mu}^T \Sigma^{-1} \bm{\mu}}}{2^d \sqrt{\det \Sigma}} \operatorname{ltor}( \mathbbm{1} -\Sigma^{-1}, \sqrt{2} \Sigma^{-1} \bm{\mu}).
    \end{equation}
    However, as opposed to the threshold detection probabilities, parity-outcome probabilities can be written as an expectation value of parity expectation values that are in the interval $[-1, 1]$. Hence, one can utilize a simple Hoeffding's inequality to get an additive error approximation of this probability, yielding an approximate classical sampling strategy. This can be beneficial in the setting where the Gaussian Boson Sampling experiment is lossy, as higher-order parity-string expectation values are suppressed by losses, an observation already used for devising a classical simulation algorithm for lossy Gaussian Boson Sampling in Ref.~\cite{dodd2025fastfrugalgaussianboson}.

    Moreover, as expected from the discussion in Appendix~\ref{app:mmd_loss_photonic}, we can rewrite the probabilities from \Cref{eq:parity_probability} as the Walsh-Hadamard transform of the parity expectation values as
    \begin{equation}
        p(\bm{x}) \coloneqq \frac{1}{2^d} \sum_{\bm{s} \in \{ 0, 1\}^d } (-1)^{ \bm{x} \cdot \bm{s}} \expval{\Pi_{\xi(\bm{s})}}_{\rho},
    \end{equation}
    where $\xi(\bm{s}) \coloneqq \{ s_i \in [d] : s_i = 1 \}$. Hence, given all the parity expectation values for all $2^d$ subsystems, one can compute the parity detection probabilities in $\mathcal{O}(d 2^d)$ time using the fast Walsh-Hadamard transform~\cite{fast_wh}.

\section{Classical training of threshold GBBMs}\label{sec:classical_training_algorithm_threshold}
    This Appendix outlines the classical simulation algorithm for training threshold GBBMs. Transitioning from parity GBBMs to threshold GBBMs requires replacing parity operator strings with threshold operator strings, a shift that necessitates a fundamentally different set of formulas for the classical training procedure.

    We construct a generic threshold operator string for a subset of modes $A \subseteq [d]$ as follows:
    \begin{equation}
        T_{A} = \bigotimes_{j \in A} (2\ketbra{0_j} - \mathbbm{1}_j) \otimes \mathbbm{1}_{[d]\setminus A}.
    \end{equation}
    To compute the expectation value of $T_A$ by a Gaussian state $\rho$,
    we apply an inclusion-exclusion expansion over the subsets $S \subseteq A$:
    \begin{equation}\label{eq:threshold_operator_expval}
        \expval{T_A}_{\rho} = (-1)^{|A|} \sum_{S \subseteq A} (-2)^{|S|} p_0(S).
    \end{equation}
    In this expression, $p_0(S)$ denotes the probability of detecting the Gaussian state $\rho$ as a vacuum on modes labeled by $S$. This can be computed using the overlap formula~\cite{serafini} as
    \begin{equation}
        p_0(S) = \frac{
            \exp(
                - \frac{1}{2} \bm{\mu}_S^\dagger Q_{S}^{-1} \bm{\mu}_S
            )
        }{\sqrt{\det Q_{S}}},
    \end{equation}
    where $\bm{\mu}_S$ and $Q_S$ denote the mean vector and Husimi covariance matrix of the original Gaussian state restricted to modes defined by $S$. The full Husimi matrix is related to the covariance matrix as $Q = \frac{1}{2}(\Sigma + \mathbbm{1})$.

    With the analytical form of the expectation value established, we can now assess the computational complexity of evaluating these formulas during training. The primary bottleneck lies in the inclusion-exclusion expansion, which requires computing the vacuum detection probabilities for all $2^{|A|}$ subsets. Since evaluating each overlap involves the determinant and inverse of a matrix up to size $|A| \times |A|$, we arrive at the following result regarding the scalability of the threshold GBBMs:

    \begin{theorem}[Classical trainability of threshold GBBMs]
        For a \gbbm\ over $d$ bosonic modes, as defined in \cref{def:model}, the expectation value of any length-$\ell$ threshold string can be evaluated classically in $\mathcal{O}(\ell^{3} 2^\ell)$ time, given the corresponding covariance matrix. This covariance matrix can be computed in $\mathcal{O}(L d^{3})$ time using $\mathcal{O}(d^{2})$ memory. Consequently, the model admits efficient classical training via automatic differentiation and backpropagation, when threshold operator strings  are of length $\mathcal{O}(\log(d))$.   
    \end{theorem}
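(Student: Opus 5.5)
The plan mirrors the proof of \Cref{thm:classical_trainability_parity}, with the parity formula replaced by the inclusion--exclusion formula \Cref{eq:threshold_operator_expval}.

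\textbf{Step 1: verify the expansion.} Write $2\ketbra{0_j}-\mathbbm{1}_j = -(\mathbbm{1}_j - 2\ketbra{0_j})$. Expanding the tensor product over $j\in A$ gives
\begin{equation*}
T_A = (-1)^{|A|}\sum_{S\subseteq A}(-2)^{|S|}\bigotimes_{j\in S}\ketbra{0_j}\otimes\mathbbm{1}_{[d]\setminus S}.
\end{equation*}
Taking the trace against $\rho$ turns each projector term into the marginal vacuum probability $p_0(S)$, which yields \Cref{eq:threshold_operator_expval}.

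\textbf{Step 2: cost of one term.} The marginal of a Gaussian state on $S$ is Gaussian, with mean and covariance obtained by deleting the rows and columns outside $S$. Hence $p_0(S)$ is given by the overlap formula with $Q_S=\tfrac12(\Sigma_S+\mathbbm{1})$. Extracting $\bm{\mu}_S$ and $Q_S$ costs $\mathcal{O}(|S|^2)$. A Cholesky factorization of $Q_S$ costs $\mathcal{O}(|S|^3)$ and yields both $\det Q_S$ and the quadratic form $\bm{\mu}_S^\dagger Q_S^{-1}\bm{\mu}_S$ by triangular solves. Note that $Q_S$ is positive definite because $\Sigma\ge 0$ follows from $\Sigma+i\Omega\ge0$, so the factorization is well defined. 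Since $|S|\le\ell$, each term costs $\mathcal{O}(\ell^3)$, and summing over the $2^\ell$ subsets gives $\mathcal{O}(\ell^3 2^\ell)$.

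\textbf{Step 3: computing the state.} The mean vector and covariance matrix are obtained exactly as in \Cref{thm:classical_trainability_parity}. Each of the $L$ layers applies \Cref{eq:gaussian_transformation} with $2d\times2d$ matrices, costing $\mathcal{O}(d^3)$ time. In-place updates need only $\mathcal{O}(d^2)$ memory, which gives $\mathcal{O}(Ld^3)$ time overall.

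\textbf{Step 4: efficiency.} Set $\ell=\mathcal{O}(\log d)$, say $\ell\le c\log d$. Then $2^\ell\le d^{c}$, so one threshold string costs $\mathcal{O}(d^{c}\log^3 d)$, which is polynomial in $d$. The truncated \mmd\ estimator uses $M$ sampled strings, as in \Cref{alg:parity_mmd_gbbm} with lengths capped at $\ell$. Its forward cost is therefore $\mathcal{O}(Ld^3+M\,\mathrm{poly}(d))$.

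Every operation in this pipeline is differentiable: matrix products, Cholesky factorization, determinants, triangular solves, exponentials, and finite sums. Reverse-mode automatic differentiation then produces the full gradient at a constant-factor overhead in time. This gives efficient training.

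\textbf{Main obstacle.} The delicate point is the memory of backpropagation. It stores intermediates for all $L$ layers and for all $2^\ell$ subsets, so it exceeds the $\mathcal{O}(d^2)$ memory bound. That bound should be read as applying to computing the state itself. For the gradient, I would either accept $\mathcal{O}(Ld^2+M2^\ell\ell^2)$ memory, which is still polynomial for $\ell=\mathcal{O}(\log d)$, or invoke checkpointing to trade memory for recomputation.

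A secondary subtlety is numerical cancellation in the alternating sum. It does not affect the complexity claim, but in practice I would address it by using stable log-determinants.
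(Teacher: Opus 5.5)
Your proposal is correct and follows the paper's proof. You reuse the $\mathcal{O}(Ld^3)$-time, $\mathcal{O}(d^2)$-memory state evolution from \Cref{thm:classical_trainability_parity}, then bound the inclusion--exclusion sum by $2^\ell$ terms, each costing $\mathcal{O}(\ell^3)$ for a determinant and a linear solve, which is polynomial in $d$ when $\ell=\mathcal{O}(\log d)$. Your extra checks are sound refinements that the paper leaves implicit: the derivation of the expansion, the positive definiteness of $Q_S$, and the caveat that the $\mathcal{O}(d^2)$ memory bound covers the forward pass rather than backpropagation.
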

    \begin{proof}
        The evaluation of the covariance matrix is exactly the same as in \Cref{thm:classical_trainability_parity}.
        Moreover, evaluating \Cref{eq:threshold_operator_expval} requires $\mathcal{O}(\ell^3 2^\ell)$ steps, as the algorithm needs to iterate over $2^\ell$ summands, and each summand requires computing a determinant and an inverse (or solving the equivalent linear system) which requires $\mathcal{O}(\ell^3)$ time to evaluate. Consequently, the model admits efficient classical training for threshold operator strings that are of length $\ell = \mathcal{O}(\log(d))$, as this yields an evaluation time for the threshold operator string expectation values of $\mathcal{O}(\mathrm{poly}(d))$.
    \end{proof}
    
    \noindent While the results in \Cref{sec:threshold_experiment} indicate that the threshold-based approach maintains robustness in regimes where parity-based methods fail to converge, this performance comes at the cost of classical training efficiency. This highlights a fundamental trade-off between the expressivity of the model and its computational overhead during classical training.

    Notably, the parity-based and threshold-based variants converge in the dilute regime, where the number of modes scales at least quadratically with the total photon count ($n = \mathcal{O}(\sqrt{d})$). Given that parity-based GBBMs offer superior classical training efficiency, they can serve as an effective pre-training stage for threshold models. This hierarchical approach significantly reduces the overall computational overhead of training the more expressive threshold-based variant.
\section{Training set visualization}
\label{appendix:training_examples}
Here we present a few examples of the cellular automaton and binarized USPS datasets, naturally defined over $6\times 18$ and $16\times16$ grids, respectively. Random samples are shown in \cref{fig:training_examples}.

\begin{figure}[H]
    \centering
    
    \begin{subfigure}[b]{0.7\textwidth}
        \centering
        \includegraphics[width=\textwidth]{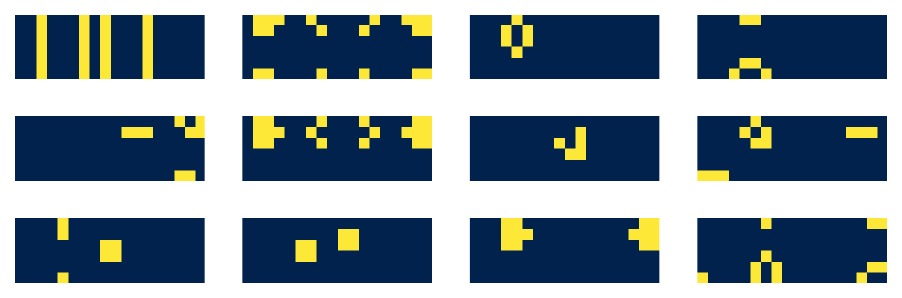}
    \end{subfigure}
    
    \vspace{0.5cm}
    
    \begin{subfigure}[b]{0.7\textwidth}
        \centering
        \includegraphics[width=\textwidth]{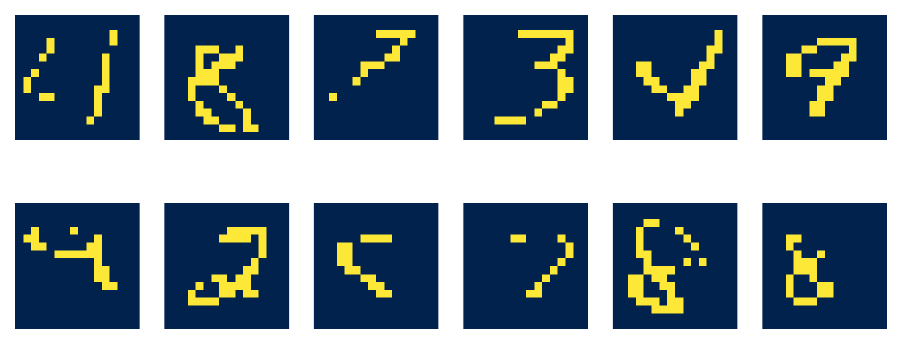}
    \end{subfigure}
    
    \caption{Examples from the cellular automaton (top) and USPS (bottom) training datasets.}
    \label{fig:training_examples}
\end{figure}

\printbibliography

\end{document}

%% file: header.tex
\usepackage[lmargin=2cm,rmargin=2cm]{geometry}
\usepackage[utf8]{inputenc}
\usepackage{amsmath,amssymb,amsfonts,mathtools,amsthm,stmaryrd,bbm,physics, dsfont}
\usepackage[group-separator={,}]{siunitx}
\usepackage{todonotes}
\usepackage{booktabs}
\usepackage{afterpage}
\usepackage{url}
\usepackage{hyperref}
\usepackage{comment}
\usepackage{bm}
\usepackage{subcaption}
\usepackage{graphicx}

\usepackage{algorithm}       % For the algorithm floating environment
\usepackage{algpseudocode}   % For the pseudocode layout

\theoremstyle{definition}

\newtheorem{theorem}{Theorem}

\newtheorem{definition}{Definition}

\usepackage{xspace}
\usepackage[title]{appendix}
\usepackage[capitalise]{cleveref}
\usepackage{authblk}

\usepackage{xcolor,color}

\newcommand{\gbbm}{GBBM}

\newcommand{\mmd}{$\rm{MMD}^2$}

\newcommand{\p}{p}

\usepackage[
    style=numeric-comp,
    sorting=none,
    defernumbers=true,
    maxbibnames=20,
    maxcitenames=20,
    date=year,
    isbn=false,
    giveninits=true,
    uniquename=init,
]{biblatex}

\renewbibmacro*{doi+eprint+url}{%
    \iftoggle{bbx:url}
    {\iffieldundef{doi}{\iffieldundef{eprint}{\usebibmacro{url+urldate}}{}}{}}
    {}%
    \newunit\newblock
    \iftoggle{bbx:eprint}
    {\iffieldundef{doi}{\usebibmacro{eprint}}{}}
    {}%
    \newunit\newblock
    \iftoggle{bbx:doi}
    {\printfield{doi}}
    {}
}

\renewbibmacro{in:}{}
\AtEveryBibitem{%
\clearfield{issue}%
\clearfield{number}}

\DefineBibliographyStrings{english}{%
    urlseen = {last accessed on},
}

%% file: references.bib
@article{Bulmer_2022_threshold,
   title={Threshold detection statistics of bosonic states},
   volume={106},
   ISSN={2469-9934},
   url={http://dx.doi.org/10.1103/PhysRevA.106.043712},
   DOI={10.1103/physreva.106.043712},
   number={4},
   journal={Phys. Rev. A},
   publisher={American Physical Society (APS)},
   author={Bulmer, J. F. F. and Paesani, S. and Chadwick, R. S. and Quesada, N.},
   year={2022},
   month=oct
}

@misc{kaposi2022polynomialspeeduptorontoniancalculation,
      title={Polynomial speedup in Torontonian calculation by a scalable recursive algorithm}, 
      author={Ágoston Kaposi and Zoltán Kolarovszki and Tamás Kozsik and Zoltán Zimborás and Péter Rakyta},
      year={2022},
      eprint={2109.04528},
      archivePrefix={arXiv},
      primaryClass={quant-ph},
      url={https://arxiv.org/abs/2109.04528}, 
}

@book{Caianiello1973,
  author    = {Caianiello, Eduardo R.},
  title     = {Combinatorics \& {R}enormalization in {Q}uantum {F}ield {T}heory},
  publisher = {Benjamin},
  address   = {Reading},
  volume    = {38},
  year      = {1973},
  url       = {https://www.osti.gov/biblio/4338754}
}

@article{Kruse_2019,
  title     = {Detailed study of {G}aussian boson sampling},
  author    = {Kruse, Regina and Hamilton, Craig S. and Sansoni, Linda and Barkhofen, Sonja and Silberhorn, Christine and Jex, Igor},
  journal   = {Phys. Rev. A},
  volume    = {100},
  issue     = {3},
  pages     = {032326},
  numpages  = {15},
  year      = {2019},
  month     = Sep,
  publisher = {American Physical Society},
  doi       = {10.1103/PhysRevA.100.032326},
  url       = {https://link.aps.org/doi/10.1103/PhysRevA.100.032326}
}

@article{hamilton2017,
  title     = {Gaussian {B}oson {S}ampling},
  author    = {Hamilton, C. S. and Kruse, R. and Sansoni, L. and Barkhofen, S. and Silberhorn, C. and Jex, I.},
  journal   = {Phys. Rev. Lett.},
  volume    = {119},
  issue     = {17},
  pages     = {170501},
  numpages  = {5},
  year      = {2017},
  publisher = {American Physical Society},
  doi       = {10.1103/PhysRevLett.119.170501},
  url       = {https://link.aps.org/doi/10.1103/PhysRevLett.119.170501}
}

@misc{bakó2025fermionicbornmachinesclassical,
      title={Fermionic Born Machines: Classical training of quantum generative models based on Fermion Sampling}, 
      author={Bence Bakó and Zoltán Kolarovszki and Zoltán Zimborás},
      year={2025},
      eprint={2511.13844},
      archivePrefix={arXiv},
      primaryClass={quant-ph},
      url={https://arxiv.org/abs/2511.13844}, 
}

@article{rudolph2024trainability,
   title={Trainability barriers and opportunities in quantum generative modeling},
   volume={10},
   ISSN={2056-6387},
   url={http://dx.doi.org/10.1038/s41534-024-00902-0},
   DOI={10.1038/s41534-024-00902-0},
   number={1},
   journal={npj Quantum Inf.},
   publisher={Springer Science and Business Media LLC},
   author={Rudolph, Manuel S. and Lerch, Sacha and Thanasilp, Supanut and Kiss, Oriel and Shaya, Oxana and Vallecorsa, Sofia and Grossi, Michele and Holmes, Zoë},
   year={2024},
   month=nov }

@article{kasture2023protocols,
   title={Protocols for classically training quantum generative models on probability distributions},
   volume={108},
   ISSN={2469-9934},
   url={http://dx.doi.org/10.1103/PhysRevA.108.042406},
   DOI={10.1103/physreva.108.042406},
   number={4},
   journal={Phys. Rev. A},
   publisher={American Physical Society (APS)},
   author={Kasture, Sachin and Kyriienko, Oleksandr and Elfving, Vincent E.},
   year={2023},
   month=oct }

@misc{recio2025iqpopt,
      title={{IQPopt: Fast optimization of instantaneous quantum polynomial circuits in JAX}}, 
      author={Erik Recio-Armengol and Joseph Bowles},
      year={2025},
      eprint={2501.04776},
      archivePrefix={arXiv},
      primaryClass={quant-ph},
      url={https://arxiv.org/abs/2501.04776}, 
}

@misc{recio2025train,
      title={Train on classical, deploy on quantum: scaling generative quantum machine learning to a thousand qubits}, 
      author={Erik Recio-Armengol and Shahnawaz Ahmed and Joseph Bowles},
      year={2025},
      eprint={2503.02934},
      archivePrefix={arXiv},
      primaryClass={quant-ph},
      url={https://arxiv.org/abs/2503.02934}, 
}

@article{oszmaniec2022fermion,
   title={{Fermion Sampling: A Robust Quantum Computational Advantage Scheme Using Fermionic Linear Optics and Magic Input States}},
   volume={3},
   ISSN={2691-3399},
   url={http://dx.doi.org/10.1103/PRXQuantum.3.020328},
   DOI={10.1103/prxquantum.3.020328},
   number={2},
   journal={PRX Quantum},
   publisher={American Physical Society (APS)},
   author={Oszmaniec, Michał and Dangniam, Ninnat and Morales, Mauro E.S. and Zimborás, Zoltán},
   year={2022},
   month=may }

@misc{bako2024problem,
      title={{Problem-informed Graphical Quantum Generative Learning}}, 
      author={Bence Bakó and Dániel T. R. Nagy and Péter Hága and Zsófia Kallus and Zoltán Zimborás},
      year={2025},
      eprint={2405.14072},
      archivePrefix={arXiv},
      primaryClass={quant-ph},
      url={https://arxiv.org/abs/2405.14072}, 
}

@article{Ankan2024,
  author  = {Ankur Ankan and Johannes Textor},
  title   = {pgmpy: A Python Toolkit for Bayesian Networks},
  journal = {J. Mach. Learn. Res.},
  year    = {2024},
  volume  = {25},
  number  = {265},
  pages   = {1--8},
  url     = {http://jmlr.org/papers/v25/23-0487.html}
}

@article{chow1968approximating,
  title={Approximating discrete probability distributions with dependence trees},
  author={Chow, CKCN and Liu, Cong},
  journal={IEEE Trans. Inf. Theory},
  volume={14},
  number={3},
  pages={462--467},
  year={1968},
  publisher={IEEE},
  doi={10.1109/TIT.1968.1054142}
}

@book{adamatzky2010game,
  title={Game of life cellular automata},
  author={Adamatzky, Andrew},
  volume={1},
  year={2010},
  publisher={Springer},
  doi={10.1007/978-1-84996-217-9}
}

@misc{kurkin2025universality,
      title={Universality and kernel-adaptive training for classically trained, quantum-deployed generative models}, 
      author={Andrii Kurkin and Kevin Shen and Susanne Pielawa and Hao Wang and Vedran Dunjko},
      year={2025},
      eprint={2510.08476},
      archivePrefix={arXiv},
      primaryClass={quant-ph},
      url={https://arxiv.org/abs/2510.08476}, 
}

@misc{garreau2017large,
      title={Large sample analysis of the median heuristic}, 
      author={Damien Garreau and Wittawat Jitkrittum and Motonobu Kanagawa},
      year={2018},
      eprint={1707.07269},
      archivePrefix={arXiv},
      primaryClass={math.ST},
      url={https://arxiv.org/abs/1707.07269}, 
}

@book{murphy2012machine,
  title={Machine learning: a probabilistic perspective},
  author={Murphy, Kevin P},
  year={2012},
  publisher={MIT press},
  url={https://dl.acm.org/doi/book/10.5555/2380985}
}

@article{ackley1985learning,
title = {A learning algorithm for boltzmann machines},
journal = {Cognitive Science},
volume = {9},
number = {1},
pages = {147-169},
year = {1985},
issn = {0364-0213},
doi = {10.1016/S0364-0213(85)80012-4},
url = {https://www.sciencedirect.com/science/article/pii/S0364021385800124},
author = {David H. Ackley and Geoffrey E. Hinton and Terrence J. Sejnowski}
}

@article{baydin2018automatic,
  title={Automatic differentiation in machine learning: a survey},
  author={Baydin, Atilim Gunes and Pearlmutter, Barak A and Radul, Alexey Andreyevich and Siskind, Jeffrey Mark},
  journal={J. Mach. Learn. Res.},
  volume={18},
  number={153},
  pages={1--43},
  doi={10.5555/3122009.3242010},
  year={2018}
}

@book{Goodfellow-et-al-2016,
    title={Deep Learning},
    author={Ian Goodfellow and Yoshua Bengio and Aaron Courville},
    publisher={MIT Press},
    note={\url{http://www.deeplearningbook.org}},
    year={2016}
}

@inproceedings{bengio2014deep,
  title={Deep generative stochastic networks trainable by backprop},
  author={Bengio, Yoshua and Laufer, Eric and Alain, Guillaume and Yosinski, Jason},
  booktitle={International conference on machine learning},
  pages={226--234},
  year={2014},
  organization={PMLR},
  doi={10.5555/3044805.3044918}
}

@article{mcclean2018barren,
  title={Barren plateaus in quantum neural network training landscapes},
  author={McClean, Jarrod R and Boixo, Sergio and Smelyanskiy, Vadim N and Babbush, Ryan and Neven, Hartmut},
  journal={Nat. Commun.},
  volume={9},
  number={1},
  pages={4812},
  year={2018},
  doi={10.1038/s41467-018-07090-4},
  publisher={Nature Publishing Group UK London}
}

@article{larocca2025barren,
  title={Barren plateaus in variational quantum computing},
  author={Larocca, Martin and Thanasilp, Supanut and Wang, Samson and Sharma, Kunal and Biamonte, Jacob and Coles, Patrick J and Cincio, Lukasz and McClean, Jarrod R and Holmes, Zo{\"e} and Cerezo, Marco},
  journal={Nat. Rev. Phys.},
  pages={1--16},
  year={2025},
  publisher={Nature Publishing Group UK London},
  doi={10.1038/s42254-025-00813-9}
}

@article{anschuetz2022quantum,
  title={Quantum variational algorithms are swamped with traps},
  author={Anschuetz, Eric R and Kiani, Bobak T},
  journal={Nat. Commun.},
  volume={13},
  number={1},
  pages={7760},
  year={2022},
  publisher={Nature Publishing Group UK London},
  doi={10.1038/s41467-022-35364-5}
}

@inproceedings{10.5555/3666122.3668062,
author = {Abbas, Amira and King, Robbie and Huang, Hsin-Yuan and Huggins, William J. and Movassagh, Ramis and Gilboa, Dar and McClean, Jarrod R.},
title = {On quantum backpropagation, information reuse, and cheating measurement collapse},
year = {2023},
publisher = {Curran Associates Inc.},
address = {Red Hook, NY, USA},
booktitle = {Proceedings of the 37th International Conference on Neural Information Processing Systems},
articleno = {1940},
numpages = {28},
location = {New Orleans, LA, USA},
series = {NIPS '23},
url={https://dl.acm.org/doi/10.5555/3666122.3668062}
}

@inproceedings{gilyen2019optimizing,
  title={Optimizing quantum optimization algorithms via faster quantum gradient computation},
  author={Gily{\'e}n, Andr{\'a}s and Arunachalam, Srinivasan and Wiebe, Nathan},
   url={http://dx.doi.org/10.1137/1.9781611975482.87},
   DOI={10.1137/1.9781611975482.87},
  booktitle={Proceedings of the Thirtieth Annual ACM-SIAM Symposium on Discrete Algorithms},
  pages={1425--1444},
  year={2019},
  organization={SIAM}
}

@article{bowles2025backpropagation,
  title={Backpropagation scaling in parameterised quantum circuits},
  author={Bowles, Joseph and Wierichs, David and Park, Chae-Yeun},
  journal={Quantum},
   url={http://dx.doi.org/10.22331/q-2025-10-02-1873},
   DOI={10.22331/q-2025-10-02-1873},
  volume={9},
  pages={1873},
  year={2025},
  publisher={Verein zur F{\"o}rderung des Open Access Publizierens in den Quantenwissenschaften}
}

@article{coyle2025training,
  title={Training-efficient density quantum machine learning},
  author={Coyle, Brian and Raj, Snehal and Mathur, Natansh and Cherrat, El Amine and Jain, Nishant and Kazdaghli, Skander and Kerenidis, Iordanis},
  journal={npj Quantum Information},
  volume={11},
  number={1},
  pages={172},
  year={2025},
  publisher={Nature Publishing Group UK London},
  doi={10.1038/s41534-025-01099-6}
}

@article{PRXQuantum.5.020328,
  title = {{Theory for Equivariant Quantum Neural Networks}},
  author = {Nguyen, Quynh T. and Schatzki, Louis and Braccia, Paolo and Ragone, Michael and Coles, Patrick J. and Sauvage, Fr\'ed\'eric and Larocca, Mart\'{\i}n and Cerezo, M.},
  journal = {PRX Quantum},
  volume = {5},
  issue = {2},
  pages = {020328},
  numpages = {40},
  year = {2024},
  month = {May},
  publisher = {American Physical Society},
  doi = {10.1103/PRXQuantum.5.020328},
  url = {https://link.aps.org/doi/10.1103/PRXQuantum.5.020328}
}

@article{Monbroussou2025trainability,
  doi = {10.22331/q-2025-05-15-1745},
  url = {https://doi.org/10.22331/q-2025-05-15-1745},
  title = {Trainability and {E}xpressivity of {H}amming-{W}eight {P}reserving {Q}uantum {C}ircuits for {M}achine {L}earning},
  author = {Monbroussou, L{\'{e}}o and Mamon, Eliott Z. and Landman, Jonas and Grilo, Alex B. and Kukla, Romain and Kashefi, Elham},
  journal = {{Quantum}},
  issn = {2521-327X},
  publisher = {{Verein zur F{\"{o}}rderung des Open Access Publizierens in den Quantenwissenschaften}},
  volume = {9},
  pages = {1745},
  month = may,
  year = {2025}
}

@article{PRXQuantum.3.030341,
  title = {Group-Invariant Quantum Machine Learning},
  author = {Larocca, Mart\'{\i}n and Sauvage, Fr\'ed\'eric and Sbahi, Faris M. and Verdon, Guillaume and Coles, Patrick J. and Cerezo, M.},
  journal = {PRX Quantum},
  volume = {3},
  issue = {3},
  pages = {030341},
  numpages = {25},
  year = {2022},
  month = {Sep},
  publisher = {American Physical Society},
  doi = {10.1103/PRXQuantum.3.030341},
  url = {https://link.aps.org/doi/10.1103/PRXQuantum.3.030341}
}

@article{PRXQuantum.4.010328,
  title = {{Exploiting Symmetry in Variational Quantum Machine Learning}},
  author = {Meyer, Johannes Jakob and Mularski, Marian and Gil-Fuster, Elies and Mele, Antonio Anna and Arzani, Francesco and Wilms, Alissa and Eisert, Jens},
  journal = {PRX Quantum},
  volume = {4},
  issue = {1},
  pages = {010328},
  numpages = {27},
  year = {2023},
  month = {Mar},
  publisher = {American Physical Society},
  doi = {10.1103/PRXQuantum.4.010328},
  url = {https://link.aps.org/doi/10.1103/PRXQuantum.4.010328}
}

@article{PRXQuantum.4.020327,
  title = {Speeding Up Learning Quantum States Through Group Equivariant Convolutional Quantum Ans\"atze},
  author = {Zheng, Han and Li, Zimu and Liu, Junyu and Strelchuk, Sergii and Kondor, Risi},
  journal = {PRX Quantum},
  volume = {4},
  issue = {2},
  pages = {020327},
  numpages = {18},
  year = {2023},
  month = {May},
  publisher = {American Physical Society},
  doi = {10.1103/PRXQuantum.4.020327},
  url = {https://link.aps.org/doi/10.1103/PRXQuantum.4.020327}
}

@article{benedetti2019generative,
  title={A generative modeling approach for benchmarking and training shallow quantum circuits},
  author={Benedetti, Marcello and Garcia-Pintos, Delfina and Perdomo, Oscar and Leyton-Ortega, Vicente and Nam, Yunseong and Perdomo-Ortiz, Alejandro},
  journal={npj Quantum Inf.},
   url={http://dx.doi.org/10.1038/s41534-019-0157-8},
   DOI={10.1038/s41534-019-0157-8},
  volume={5},
  number={1},
  pages={45},
  year={2019},
  publisher={Nature Publishing Group UK London}
}

@article{PhysRevA.98.062324,
  title = {Differentiable learning of quantum circuit Born machines},
  author = {Liu, Jin-Guo and Wang, Lei},
  journal = {Phys. Rev. A},
  volume = {98},
  issue = {6},
  pages = {062324},
  numpages = {9},
  year = {2018},
  month = {Dec},
  publisher = {American Physical Society},
  doi = {10.1103/PhysRevA.98.062324},
  url = {https://link.aps.org/doi/10.1103/PhysRevA.98.062324}
}

@article{coyle2020born,
  title={{The Born supremacy: quantum advantage and training of an Ising Born machine}},
  author={Coyle, Brian and Mills, Daniel and Danos, Vincent and Kashefi, Elham},
  journal={npj Quantum Inf.},
   url={http://dx.doi.org/10.1038/s41534-020-00288-9},
   DOI={10.1038/s41534-020-00288-9},
  volume={6},
  number={1},
  pages={60},
  year={2020},
  publisher={Nature Publishing Group UK London}
}

@article{scikit-learn,
  title={Scikit-learn: Machine Learning in {P}ython},
  author={Pedregosa, F. and Varoquaux, G. and Gramfort, A. and Michel, V.
          and Thirion, B. and Grisel, O. and Blondel, M. and Prettenhofer, P.
          and Weiss, R. and Dubourg, V. and Vanderplas, J. and Passos, A. and
          Cournapeau, D. and Brucher, M. and Perrot, M. and Duchesnay, E.},
  journal={Journal of Machine Learning Research},
  volume={12},
  pages={2825--2830},
  year={2011},
  url={https://jmlr.org/papers/v12/pedregosa11a.html}
}

@inproceedings{adam2014method,
  author    = {Kingma, D. and Ba, J.},
  booktitle = {International Conference on Learning Representations (ICLR)},
  title     = {Adam: A Method for Stochastic Optimization},
  year      = {2015},
  address   = {San Diego, CA, USA},
  url       = {https://inspirehep.net/literature/1670744}
}

@misc{jax2018github,
  author  = {James Bradbury and Roy Frostig and Peter Hawkins and Matthew James Johnson and Chris Leary and Dougal Maclaurin and George Necula and Adam Paszke and Jake Vander{P}las and Skye Wanderman-{M}ilne and Qiao Zhang},
  title   = {{JAX}: composable transformations of {P}ython+{N}um{P}y programs},
  url     = {http://github.com/google/jax},
  version = {0.3.13},
  journal = {GitHub},
  year    = {2018},
  urldate = {2025-05-17}
}

@online{qml_benchmarks,
  title   = {Benchmarking for quantum machine learning models},
  url     = {https://github.com/XanaduAI/qml-benchmarks},
  urldate = {2025-11-16}
}

@article{Grier_2022,
   title={{The Complexity of Bipartite Gaussian Boson Sampling}},
   volume={6},
   ISSN={2521-327X},
   url={http://dx.doi.org/10.22331/q-2022-11-28-863},
   DOI={10.22331/q-2022-11-28-863},
   journal={Quantum},
   publisher={Verein zur Forderung des Open Access Publizierens in den Quantenwissenschaften},
   author={Grier, Daniel and Brod, Daniel J. and Arrazola, Juan Miguel and Alonso, Marcos Benicio de Andrade and Quesada, Nicolás},
   year={2022},
   month=nov, pages={863} }

@article{li2025complexity_transition_dgbs,
  title   = {A complexity transition in displaced {Gaussian Boson Sampling}},
  author  = {Li, Zhenghao and Solomons, Naomi R. and Bulmer, Jacob F. F. and Patel, Raj B. and Walmsley, Ian A.},
  journal = {npj Quantum Information},
  volume  = {11},
  pages   = {119},
  year    = {2025},
  month   = jul,
  doi     = {10.1038/s41534-025-01062-5},
  url     = {https://doi.org/10.1038/s41534-025-01062-5},
  publisher = {Springer Nature}
}

@book{serafini,
  title     = {Gaussian {S}tates of {C}ontinuous {V}ariable {S}ystems},
  author    = {Serafini, Alessio},
  year      = {2017},
  publisher = {CRC Press, Taylor \& Francis Group}
}

@article{adesso:2014,
  title     = {Continuous {V}ariable {Q}uantum {I}nformation: {G}aussian {S}tates and {B}eyond},
  volume    = {21},
  issn      = {1793-7191},
  url       = {http://dx.doi.org/10.1142/S1230161214400010},
  doi       = {10.1142/s1230161214400010},
  number    = {01n02},
  journal   = {Open Systems \& Information Dynamics},
  publisher = {World Scientific Pub Co Pte Lt},
  author    = {G. Adesso and S. Ragy and A. R. Lee},
  year      = {2014},
  pages     = {1440001}
}

@misc{quesada2021quadratic,
  title         = {Quadratic speedup for simulating {G}aussian boson sampling},
  author        = {Nicolás Quesada and Juan Miguel Arrazola and Trevor Vincent and Haoyu Qi and Raúl García-Patrón},
  year          = {2021},
  eprint        = {2010.15595},
  archiveprefix = {arXiv},
  primaryclass  = {quant-ph}
}

@article{photonicadvantage1,
  title     = {Quantum computational advantage using photons},
  volume    = {370},
  issn      = {1095-9203},
  url       = {http://dx.doi.org/10.1126/science.abe8770},
  doi       = {10.1126/science.abe8770},
  number    = {6523},
  journal   = {Science},
  publisher = {American Association for the Advancement of Science (AAAS)},
  author    = {Zhong, H.-S. and
               Wang, H. and
               Deng, Y.-H. and
               Chen, M.-C. and
               Peng, L.-C. and
               Luo, Y.-H and
               Qin, J. and
               Wu, D. and
               Ding, X. and
               Hu, Y. and
               Hu, P. and
               Yang, X-Y and
               Zhang, W-J and
               Li, H. and
               Li et al., Y.},
  year      = {2020},
  pages     = {1460–1463}
}

@article{borealis,
  author  = {L. S. Madsen and
             F. Laudenbach and
             M. F. Askarani and
             F. Rortais and
             T. Vincent and
             J. F. F. Bulmer and
             F. M. Miatto and
             L. Neuhaus and
             L. G. Helt and
             M. J. Collins and
             A. Lita and
             T. Gerrits and
             S. Nam and
             V. Vaidya and
             M. Menotti et al.},
  year    = {2022},
  pages   = {75-81},
  title   = {Quantum computational advantage with a programmable photonic processor},
  volume  = {606},
  journal = {Nature},
  doi     = {10.1038/s41586-022-04725-x}
}

@misc{liu2025robustquantumcomputationaladvantage,
      title={Robust quantum computational advantage with programmable 3050-photon Gaussian boson sampling}, 
      author={Hua-Liang Liu and Hao Su and Si-Qiu Gong and Yi-Chao Gu and Hao-Yang Tang and Meng-Hao Jia and Qian Wei and Yukun Song and Dongzhou Wang and Mingyang Zheng and Faxi Chen and Libo Li and Siyu Ren and Xuezhi Zhu and Meihong Wang and Yaojian Chen and Yanfei Liu and Longsheng Song and Pengyu Yang and Junshi Chen and Hong An and Lei Zhang and Lin Gan and Guangwen Yang and Jia-Min Xu and Yu-Ming He and Hui Wang and Han-Sen Zhong and Ming-Cheng Chen and Xiao Jiang and Li Li and Nai-Le Liu and Yu-Hao Deng and Xiao-Long Su and Qiang Zhang and Chao-Yang Lu and Jian-Wei Pan},
      year={2025},
      eprint={2508.09092},
      archivePrefix={arXiv},
      primaryClass={quant-ph},
      url={https://arxiv.org/abs/2508.09092},
}

@article{zhong2021phase,
  title     = {Phase-{P}rogrammable {G}aussian {B}oson {S}ampling {U}sing {S}timulated {S}queezed {L}ight},
  volume    = {127},
  issn      = {1079-7114},
  url       = {http://dx.doi.org/10.1103/PhysRevLett.127.180502},
  doi       = {10.1103/physrevlett.127.180502},
  number    = {18},
  journal   = {Phys. Rev. Lett.},
  publisher = {American Physical Society (APS)},
  author    = {Zhong, H.-S. and Deng, Yu-Hao and Qin, Jian and Wang, Hui and Chen, Ming-Cheng and Peng, Li-Chao and Luo, Yi-Han and Wu, Dian and Gong, Si-Qiu and Su, Hao and Hu, Yi and Hu, Peng and Yang, Xiao-Yan and Zhang, Wei-Jun and Li, Hao and Li, Yuxuan and Jiang, Xiao and Gan, Lin and Yang, Guangwen and You, Lixing and Wang, Zhen and Li, Li and Liu, Nai-Le and Renema, Jelmer J. and Lu, Chao-Yang and Pan, Jian-Wei},
  year      = {2021}
}

@article{Bulmer_2022,
  doi       = {10.1126/sciadv.abl9236},
  year      = 2022,
  publisher = {American Association for the Advancement of Science ({AAAS})},
  volume    = {8},
  number    = {4},
  author    = {J. F. F. Bulmer and B. A. Bell and R. S. Chadwick and A. E. Jones and D. Moise and A. Rigazzi and J. Thorbecke and U.-U. Haus and T. Van Vaerenbergh and R. B. Patel and I. A. Walmsley and A. Laing},
  title     = {The boundary for quantum advantage in Gaussian boson sampling},
  journal   = {Science Advances}
}

@article{changhun_mps,
author = {Oh, Changhun and Liu, Minzhao and Alexeev, Yuri and Fefferman, Bill and Jiang, Liang},
year = {2024},
month = {06},
pages = {1461-1468},
title = {{Classical algorithm for simulating experimental Gaussian boson sampling}},
volume = {20},
journal = {Nature Physics},
doi = {10.1038/s41567-024-02535-8}
}

@misc{dodd2025fastfrugalgaussianboson,
      title={{A fast and frugal Gaussian Boson Sampling emulator}}, 
      author={Tom Dodd and Javier Martínez-Cifuentes and Oliver Thomson Brown and Nicolás Quesada and Raúl García-Patrón},
      year={2025},
      eprint={2511.14923},
      archivePrefix={arXiv},
      primaryClass={quant-ph},
      url={https://arxiv.org/abs/2511.14923}, 
}

@article{Salavrakos_2025,
   title={Error-mitigated photonic quantum circuit Born machine},
   volume={111},
   ISSN={2469-9934},
   url={http://dx.doi.org/10.1103/PhysRevA.111.L030401},
   DOI={10.1103/physreva.111.l030401},
   number={3},
   journal={Physical Review A},
   publisher={American Physical Society (APS)},
   author={Salavrakos, Alexia and Sedrakyan, Tigran and Mills, James and Mansfield, Shane and Mezher, Rawad},
   year={2025},
   month=mar }

@article{killoran2019continuous,
  title     = {Continuous-variable quantum neural networks},
  author    = {Killoran, Nathan and Bromley, Thomas R. and Arrazola, Juan Miguel and Schuld, Maria and Quesada, Nicol\'as and Lloyd, Seth},
  journal   = {Phys. Rev. Res.},
  volume    = {1},
  issue     = {3},
  pages     = {033063},
  numpages  = {22},
  year      = {2019},
  month     = Oct,
  publisher = {American Physical Society},
  doi       = {10.1103/PhysRevResearch.1.033063},
  url       = {https://link.aps.org/doi/10.1103/PhysRevResearch.1.033063}
}

@article{clements2017optimal,
  author    = {W. R. Clements and P. C. Humphreys and B. J. Metcalf and W. S. Kolthammer and I. A. Walmsley},
  journal   = {Optica},
  number    = {12},
  pages     = {1460--1465},
  publisher = {Optica Publishing Group},
  title     = {Optimal design for universal multiport interferometers},
  volume    = {3},
  year      = {2016},
  url       = {https://opg.optica.org/optica/abstract.cfm?URI=optica-3-12-1460},
  doi       = {10.1364/OPTICA.3.001460}
}

@article{reck,
  title = {Experimental realization of any discrete unitary operator},
  author = {Reck, Michael and Zeilinger, Anton and Bernstein, Herbert J. and Bertani, Philip},
  journal = {Phys. Rev. Lett.},
  volume = {73},
  issue = {1},
  pages = {58--61},
  numpages = {0},
  year = {1994},
  month = {Jul},
  publisher = {American Physical Society},
  doi = {10.1103/PhysRevLett.73.58},
  url = {https://link.aps.org/doi/10.1103/PhysRevLett.73.58}
}

@article{cariolaro2016bloch,
  title     = {{Bloch-Messiah reduction of Gaussian unitaries by Takagi factorization}},
  author    = {Cariolaro, G. and Pierobon, G.},
  journal   = {Phys. Rev. A},
  volume    = {94},
  number    = {6},
  pages     = {062109},
  year      = {2016},
  publisher = {APS}
}

@article{wigner_function,
  title = {{On the Quantum Correction For Thermodynamic Equilibrium}},
  author = {Wigner, E.},
  journal = {Phys. Rev.},
  volume = {40},
  issue = {5},
  pages = {749--759},
  numpages = {0},
  year = {1932},
  month = {Jun},
  publisher = {American Physical Society},
  doi = {10.1103/PhysRev.40.749},
  url = {https://link.aps.org/doi/10.1103/PhysRev.40.749}
}

@misc{bowles2025genomic,
    title={Genomic},
    author={Joseph Bowles},
    howpublished={\url{https://pennylane.ai/datasets/other/genomic}},
    year={2025}
}

@article{yelmen2021creating,
  title={Creating artificial human genomes using generative neural networks},
  author={Yelmen, Burak and Decelle, Aur{\'e}lien and Ongaro, Linda and Marnetto, Davide and Tallec, Corentin and Montinaro, Francesco and Furtlehner, Cyril and Pagani, Luca and Jay, Flora},
  journal={PLoS genetics},
  volume={17},
  number={2},
  pages={e1009303},
  year={2021},
  publisher={Public Library of Science San Francisco, CA USA}
}

@article{Wierichs_2022,
   title={General parameter-shift rules for quantum gradients},
   volume={6},
   ISSN={2521-327X},
   url={http://dx.doi.org/10.22331/q-2022-03-30-677},
   DOI={10.22331/q-2022-03-30-677},
   journal={Quantum},
   publisher={Verein zur Forderung des Open Access Publizierens in den Quantenwissenschaften},
   author={Wierichs, David and Izaac, Josh and Wang, Cody and Lin, Cedric Yen-Yu},
   year={2022},
   month=mar, pages={677} }

@article{mitarai2018,
  title     = {Quantum circuit learning},
  volume    = {98},
  issn      = {2469-9934},
  url       = {http://dx.doi.org/10.1103/PhysRevA.98.032309},
  doi       = {10.1103/physreva.98.032309},
  number    = {3},
  journal   = {Phys. Rev. A},
  publisher = {American Physical Society (APS)},
  author    = {Mitarai, K. and Negoro, M. and Kitagawa, M. and Fujii, K.},
  year      = {2018}
}

@article{schuld2019,
  title     = {Evaluating analytic gradients on quantum hardware},
  volume    = {99},
  issn      = {2469-9934},
  url       = {http://dx.doi.org/10.1103/PhysRevA.99.032331},
  doi       = {10.1103/physreva.99.032331},
  number    = {3},
  journal   = {Phys. Rev. A},
  publisher = {American Physical Society (APS)},
  author    = {Schuld, M. and Bergholm, V. and Gogolin, C. and Izaac, J. and Killoran, N.},
  year      = {2019}
}

@misc{hoefler2023disentanglinghypepracticalityrealistically,
      title={Disentangling Hype from Practicality: On Realistically Achieving Quantum Advantage}, 
      author={Torsten Hoefler and Thomas Haener and Matthias Troyer},
      year={2023},
      eprint={2307.00523},
      archivePrefix={arXiv},
      primaryClass={quant-ph},
      url={https://arxiv.org/abs/2307.00523}, 
}

@article{Sweke_2020,
   title={Stochastic gradient descent for hybrid quantum-classical optimization},
   volume={4},
   ISSN={2521-327X},
   url={http://dx.doi.org/10.22331/q-2020-08-31-314},
   DOI={10.22331/q-2020-08-31-314},
   journal={Quantum},
   publisher={Verein zur Forderung des Open Access Publizierens in den Quantenwissenschaften},
   author={Sweke, Ryan and Wilde, Frederik and Meyer, Johannes and Schuld, Maria and Faehrmann, Paul K. and Meynard-Piganeau, Barthélémy and Eisert, Jens},
   year={2020},
   month=aug, pages={314} }

@article{Pappalardo_2025,
   title={Photonic parameter-shift rule: Enabling gradient computation for photonic quantum computers},
   volume={111},
   ISSN={2469-9934},
   url={http://dx.doi.org/10.1103/PhysRevA.111.032429},
   DOI={10.1103/physreva.111.032429},
   number={3},
   journal={Physical Review A},
   publisher={American Physical Society (APS)},
   author={Pappalardo, Axel and Emeriau, Pierre-Emmanuel and de Felice, Giovanni and Ventura, Brian and Jaunin, Hugo and Yeung, Richie and Coecke, Bob and Mansfield, Shane},
   year={2025},
   month=mar }

@misc{go2025sufficientconditionshardnesslossy,
      title={Sufficient conditions for hardness of lossy {G}aussian boson sampling}, 
      author={Byeongseon Go and Changhun Oh and Hyunseok Jeong},
      year={2025},
      eprint={2511.07853},
      archivePrefix={arXiv},
      primaryClass={quant-ph},
      url={https://arxiv.org/abs/2511.07853}, 
}

@inproceedings{aaronson2011bosonsampling,
  author    = {Scott Aaronson and Alex Arkhipov},
  title     = {The Computational Complexity of Linear Optics},
  booktitle = {Proceedings of the Forty-Third Annual {ACM} Symposium on Theory of Computing (STOC '11)},
  pages     = {333--342},
  publisher = {ACM},
  year      = {2011},
  doi       = {10.1145/1993636.1993682},
  eprint    = {1011.3245},
  archivePrefix = {arXiv},
  primaryClass  = {quant-ph},
  url       = {https://arxiv.org/abs/1011.3245}
}

@article{Stano_2022,
   title={Review of performance metrics of spin qubits in gated semiconducting nanostructures},
   volume={4},
   ISSN={2522-5820},
   url={http://dx.doi.org/10.1038/s42254-022-00484-w},
   DOI={10.1038/s42254-022-00484-w},
   number={10},
   journal={Nat. Rev. Phys.},
   publisher={Springer Science and Business Media LLC},
   author={Stano, Peter and Loss, Daniel},
   year={2022},
   month=aug, pages={672–688} }

@article{Arute_2019,
   title={Quantum supremacy using a programmable superconducting processor},
   volume={574},
   ISSN={1476-4687},
   url={http://dx.doi.org/10.1038/s41586-019-1666-5},
   DOI={10.1038/s41586-019-1666-5},
   number={7779},
   journal={Nature},
   publisher={Springer Science and Business Media LLC},
   author={Arute, Frank and Arya, Kunal and Babbush, Ryan and Bacon, Dave and Bardin, Joseph C. and Barends, Rami and Biswas, Rupak and Boixo, Sergio and Brandao, Fernando G. S. L. and Buell, David A. and Burkett, Brian and Chen, Yu and Chen, Zijun and Chiaro, Ben and Collins, Roberto and Courtney, William and Dunsworth, Andrew and Farhi, Edward and Foxen, Brooks and Fowler, Austin and Gidney, Craig and Giustina, Marissa and Graff, Rob and Guerin, Keith and Habegger, Steve and Harrigan, Matthew P. and Hartmann, Michael J. and Ho, Alan and Hoffmann, Markus and Huang, Trent and Humble, Travis S. and Isakov, Sergei V. and Jeffrey, Evan and Jiang, Zhang and Kafri, Dvir and Kechedzhi, Kostyantyn and Kelly, Julian and Klimov, Paul V. and Knysh, Sergey and Korotkov, Alexander and Kostritsa, Fedor and Landhuis, David and Lindmark, Mike and Lucero, Erik and Lyakh, Dmitry and Mandrà, Salvatore and McClean, Jarrod R. and McEwen, Matthew and Megrant, Anthony and Mi, Xiao and Michielsen, Kristel and Mohseni, Masoud and Mutus, Josh and Naaman, Ofer and Neeley, Matthew and Neill, Charles and Niu, Murphy Yuezhen and Ostby, Eric and Petukhov, Andre and Platt, John C. and Quintana, Chris and Rieffel, Eleanor G. and Roushan, Pedram and Rubin, Nicholas C. and Sank, Daniel and Satzinger, Kevin J. and Smelyanskiy, Vadim and Sung, Kevin J. and Trevithick, Matthew D. and Vainsencher, Amit and Villalonga, Benjamin and White, Theodore and Yao, Z. Jamie and Yeh, Ping and Zalcman, Adam and Neven, Hartmut and Martinis, John M.},
   year={2019},
   month=oct, pages={505–510} }

@misc{cogan2021deterministicsourceindistinguishablephotons,
      title={A deterministic source of indistinguishable photons in a cluster state}, 
      author={Dan Cogan and Zu-En Su and Oded Kenneth and David Gershoni},
      year={2021},
      eprint={2110.05908},
      archivePrefix={arXiv},
      primaryClass={quant-ph},
      url={https://arxiv.org/abs/2110.05908}, 
}

@article{rajeev_2022,
   title={Suppressing quantum errors by scaling a surface code logical qubit},
   volume={614},
   ISSN={1476-4687},
   url={http://dx.doi.org/10.1038/s41586-022-05434-1},
   DOI={10.1038/s41586-022-05434-1},
   number={7949},
   journal={Nature},
   publisher={Springer Science and Business Media LLC},
   author={Acharya, Rajeev and Aleiner, Igor and Allen, Richard and Andersen, Trond I. and Ansmann, Markus and Arute, Frank and Arya, Kunal and Asfaw, Abraham and Atalaya, Juan and Babbush, Ryan and Bacon, Dave and Bardin, Joseph C. and Basso, Joao and Bengtsson, Andreas and Boixo, Sergio and Bortoli, Gina and Bourassa, Alexandre and Bovaird, Jenna and Brill, Leon and Broughton, Michael and Buckley, Bob B. and Buell, David A. and Burger, Tim and Burkett, Brian and Bushnell, Nicholas and Chen, Yu and Chen, Zijun and Chiaro, Ben and Cogan, Josh and Collins, Roberto and Conner, Paul and Courtney, William and Crook, Alexander L. and Curtin, Ben and Debroy, Dripto M. and Del Toro Barba, Alexander and Demura, Sean and Dunsworth, Andrew and Eppens, Daniel and Erickson, Catherine and Faoro, Lara and Farhi, Edward and Fatemi, Reza and Flores Burgos, Leslie and Forati, Ebrahim and Fowler, Austin G. and Foxen, Brooks and Giang, William and Gidney, Craig and Gilboa, Dar and Giustina, Marissa and Grajales Dau, Alejandro and Gross, Jonathan A. and Habegger, Steve and Hamilton, Michael C. and Harrigan, Matthew P. and Harrington, Sean D. and Higgott, Oscar and Hilton, Jeremy and Hoffmann, Markus and Hong, Sabrina and Huang, Trent and Huff, Ashley and Huggins, William J. and Ioffe, Lev B. and Isakov, Sergei V. and Iveland, Justin and Jeffrey, Evan and Jiang, Zhang and Jones, Cody and Juhas, Pavol and Kafri, Dvir and Kechedzhi, Kostyantyn and Kelly, Julian and Khattar, Tanuj and Khezri, Mostafa and Kieferová, Mária and Kim, Seon and Kitaev, Alexei and Klimov, Paul V. and Klots, Andrey R. and Korotkov, Alexander N. and Kostritsa, Fedor and Kreikebaum, John Mark and Landhuis, David and Laptev, Pavel and Lau, Kim-Ming and Laws, Lily and Lee, Joonho and Lee, Kenny and Lester, Brian J. and Lill, Alexander and Liu, Wayne and Locharla, Aditya and Lucero, Erik and Malone, Fionn D. and Marshall, Jeffrey and Martin, Orion and McClean, Jarrod R. and McCourt, Trevor and McEwen, Matt and Megrant, Anthony and Meurer Costa, Bernardo and Mi, Xiao and Miao, Kevin C. and Mohseni, Masoud and Montazeri, Shirin and Morvan, Alexis and Mount, Emily and Mruczkiewicz, Wojciech and Naaman, Ofer and Neeley, Matthew and Neill, Charles and Nersisyan, Ani and Neven, Hartmut and Newman, Michael and Ng, Jiun How and Nguyen, Anthony and Nguyen, Murray and Niu, Murphy Yuezhen and O’Brien, Thomas E. and Opremcak, Alex and Platt, John and Petukhov, Andre and Potter, Rebecca and Pryadko, Leonid P. and Quintana, Chris and Roushan, Pedram and Rubin, Nicholas C. and Saei, Negar and Sank, Daniel and Sankaragomathi, Kannan and Satzinger, Kevin J. and Schurkus, Henry F. and Schuster, Christopher and Shearn, Michael J. and Shorter, Aaron and Shvarts, Vladimir and Skruzny, Jindra and Smelyanskiy, Vadim and Smith, W. Clarke and Sterling, George and Strain, Doug and Szalay, Marco and Torres, Alfredo and Vidal, Guifre and Villalonga, Benjamin and Vollgraff Heidweiller, Catherine and White, Theodore and Xing, Cheng and Yao, Z. Jamie and Yeh, Ping and Yoo, Juhwan and Young, Grayson and Zalcman, Adam and Zhang, Yaxing and Zhu, Ningfeng},
   year={2023},
   month=feb, pages={676–681} }

@article{hull1994database,
  title={A database for handwritten text recognition research},
  author={Hull, Jonathan J.},
  journal={IEEE Transactions on Pattern Analysis and Machine Intelligence},
  volume={16},
  number={5},
  pages={550--554},
  year={1994},
  publisher={IEEE},
  doi={10.1109/34.291440}
}

@article{morningstar2018deep,
  title={Deep learning the ising model near criticality},
  author={Morningstar, Alan and Melko, Roger G},
  journal={Journal of Machine Learning Research},
  volume={18},
  number={163},
  pages={1--17},
  year={2018}
}

@inproceedings{kolarovszki_cvbm,
  author={Kolarovszki, Zoltán and Nagy, Dániel T. R. and Zimborás, Zoltán},
  booktitle={IEEE Quantum Week, QCE 2024 Proceedings},
  title={{On the Learning Abilities of Photonic Continuous-Variable Born Machines}},
  year={2024},
  volume={01},
  number={},
  pages={750-756},
  doi={10.1109/QCE60285.2024.00094}
}

@article{cepaite2022,
  title     = {A continuous variable {B}orn machine},
  volume    = {4},
  issn      = {2524-4914},
  url       = {http://dx.doi.org/10.1007/s42484-022-00063-3},
  doi       = {10.1007/s42484-022-00063-3},
  number    = {1},
  journal   = {Quantum Machine Intelligence},
  publisher = {Springer Science and Business Media LLC},
  author    = {Čepaitė, Ieva and Coyle, Brian and Kashefi, Elham},
  year      = {2022},
  month     = mar
}

@article{hofheinz2009synthesizing,
  title={Synthesizing arbitrary quantum states in a superconducting resonator},
  author={Hofheinz, Max and Wang, H and Ansmann, Markus and Bialczak, Radoslaw C and Lucero, Erik and Neeley, Matthew and O'connell, AD and Sank, Daniel and Wenner, J and Martinis, John M and others},
  journal={Nature},
  volume={459},
  number={7246},
  pages={546--549},
  year={2009},
  publisher={Nature Publishing Group UK London}
}

@article{vlastakis2013deterministically,
  title={Deterministically encoding quantum information using 100-photon Schr{\"o}dinger cat states},
  author={Vlastakis, Brian and Kirchmair, Gerhard and Leghtas, Zaki and Nigg, Simon E and Frunzio, Luigi and Girvin, Steven M and Mirrahimi, Mazyar and Devoret, Michel H and Schoelkopf, Robert J},
  journal={Science},
  volume={342},
  number={6158},
  pages={607--610},
  year={2013},
  publisher={American Association for the Advancement of Science}
}

@article{Hadfield2009,
  author  = {Hadfield, Robert H.},
  title   = {Single-photon detectors for optical quantum information applications},
  journal = {Nature Photonics},
  year    = {2009},
  volume  = {3},
  number  = {12},
  pages   = {696--705},
  month   = {Dec},
  doi     = {10.1038/nphoton.2009.230},
  url     = {https://doi.org/10.1038/nphoton.2009.230},
  issn    = {1749-4893}
}

@article{cahillglauber69,
  title = {Density Operators and Quasiprobability Distributions},
  author = {Cahill, K. E. and Glauber, R. J.},
  journal = {Phys. Rev.},
  volume = {177},
  issue = {5},
  pages = {1882--1902},
  numpages = {0},
  year = {1969},
  month = {Jan},
  publisher = {American Physical Society},
  doi = {10.1103/PhysRev.177.1882},
  url = {https://link.aps.org/doi/10.1103/PhysRev.177.1882}
}

@article{threshold_gbs,
  title = {Gaussian boson sampling using threshold detectors},
  author = {Quesada, Nicol\'as and Arrazola, Juan Miguel and Killoran, Nathan},
  journal = {Phys. Rev. A},
  volume = {98},
  issue = {6},
  pages = {062322},
  numpages = {9},
  year = {2018},
  month = {Dec},
  publisher = {American Physical Society},
  doi = {10.1103/PhysRevA.98.062322},
  url = {https://link.aps.org/doi/10.1103/PhysRevA.98.062322}
}

@article{kurkin2026,
title={{Universality of Classically Trainable, Quantum-Deployed Boson-Sampling Generative Models}},
author={Kurkin, Andrii and Chabaud, Ulysse and Kolarovszki, Zolt\'an and Bak\'o, Bence and Zimbor\'as, Zolt\'an and Dunjko, Vedran},
year={2026}
}

@article{Gerry_2010,
   title={The parity operator in quantum optical metrology},
   volume={51},
   ISSN={1366-5812},
   url={http://dx.doi.org/10.1080/00107514.2010.509995},
   DOI={10.1080/00107514.2010.509995},
   number={6},
   journal={Contemporary Physics},
   publisher={Informa UK Limited},
   author={Gerry, Christopher C. and Mimih, Jihane},
   year={2010},
   month=nov, pages={497–511} }

@article{gretton,
  author  = {Arthur Gretton and Karsten M. Borgwardt and Malte J. Rasch and Bernhard Sch{{\"o}}lkopf and Alexander Smola},
  title   = {{A Kernel Two-Sample Test}},
  journal = {Journal of Machine Learning Research},
  year    = {2012},
  volume  = {13},
  number  = {25},
  pages   = {723-773},
  url     = {http://jmlr.org/papers/v13/gretton12a.html}
}

@ARTICLE{fast_wh,
  author={Fino and Algazi},
  journal={IEEE Transactions on Computers}, 
  title={Unified Matrix Treatment of the Fast Walsh-Hadamard Transform}, 
  year={1976},
  volume={C-25},
  number={11},
  pages={1142-1146},

  doi={10.1109/TC.1976.1674569}}

@misc{bouland2025complexitytheoreticfoundationsbosonsamplinglinear,
      title={Complexity-theoretic foundations of BosonSampling with a linear number of modes}, 
      author={Adam Bouland and Daniel Brod and Ishaun Datta and Bill Fefferman and Daniel Grier and Felipe Hernandez and Michal Oszmaniec},
      year={2025},
      eprint={2312.00286},
      archivePrefix={arXiv},
      primaryClass={quant-ph},
      url={https://arxiv.org/abs/2312.00286}, 
}

@online{gaussian_bosonic_born_machines_repository,
  title   = {{Source code for the classical training of GBBMs}},
  url     = {https://github.com/Budapest-Quantum-Computing-Group/gaussian_bosonic_born_machines},
  urldate = {2026-03-11}
}

@article{gottlieb2026efficient,
  title={Efficient training of photonic quantum generative models},
  author={Gottlieb, Felix and Mezher, Rawad and Ventura, Brian and Mansfield, Shane and Salavrakos, Alexia},
  year={2026}
}
